\definecolor{teal}{rgb}{0.0, 0.5, 0.5}
\newtheorem{theorem}{Theorem}
\newtheorem{proposition}[theorem]{Proposition}%
\newtheorem{Lemma}{Lemma}
\newtheorem{assumption}{Assumption}%
\newcommand{\equaref}[1]{(\ref{#1})}
\def\ind{{\rm 1\hspace{-0.90ex}1}}
\title{Dominance or Fair Play in Social Networks? \\ A Model of Influencer Popularity Dynamics}
\author[1]{Franco Galante}
\author[2]{Chiara Ravazzi}
\author[1]{Luca Vassio}
\author[3]{Michele Garetto}
\author[1,2]{Emilio Leonardi}
\affil[1]{Politecnico di Torino, Italy}
\affil[2]{National Research Council (CNR-IEIIT), Italy}
\affil[3]{Università di Torino, Italy}
\date{}  
\begin{document}
\maketitle

\begin{abstract}
This paper presents a data-driven mean-field approach to model the popularity dynamics of users seeking public attention, i.e., influencers. We propose a novel analytical model that integrates individual activity patterns, expertise in producing viral content, exogenous events, and the platform's role in visibility enhancement, ultimately determining each influencer's success. We analytically derive sufficient conditions for system ergodicity, 
enabling predictions of popularity distributions. A sensitivity analysis explores various system configurations, highlighting conditions favoring either dominance or fair play among influencers. Our findings offer valuable insights into the potential evolution of social networks towards more equitable or biased influence ecosystems.
\end{abstract}

\section{Introduction}\label{sec:intro}

Online social networks (OSNs) crucially shape digital interaction, increasingly replacing traditional media. This shift has created a fertile ground for the rise of a new class of online users: influencers. These are individuals with a large following on OSNs, who typically seek to expand it. They are also central to marketing because of their ability to drive consumer behavior \cite{9689053}. Therefore, understanding how influence hierarchies form and consolidate over OSNs is of great interest. Given the growing role of platforms in shaping opinions, it is crucial to determine whether dominant users rise through their merits or skewed feedback mechanisms that concentrate users' preferences.

Despite significant interest and numerous social network analyses, a gap remains in understanding the quantitative dynamics of popularity evolution. Firstly, a universal definition of popularity in social contexts is lacking, with the literature offering varied interpretations tied to specific scientific contexts. Popularity metrics are often identified with visibility metrics~\cite{Bakshy2011}, such as follower count or interactions \cite{vassio2022mining}. However, the follower count is relatively static and its validity has been questioned~\cite{Cha2010} due to its insensitivity to interaction dynamics and rare decrease~\cite{Bakshy2011}. Another research direction uses a system perspective, interpreting popularity dynamically (see~\cite{castaldo:tel-04001597} and references therein). Among these models are epidemic~\cite{richier2014bio}, self-exciting (e.g., Hawkes~\cite{Crane2008}), and Bass diffusion models~\cite{bass1969new}. It is established that opinion changes result from collective interaction and mutual influence~\cite{friedkin1999social}, but quantifying these influences is difficult~\cite{Ravazzi_CSM}. Mean-field models address this by replacing pairwise interactions with an average interaction across the population~\cite{Collet2010}, where individual dynamics depend on the aggregate system.

By adopting a mean-field perspective and drawing inspiration from OSN data (Facebook in this case), we propose a novel, comprehensive model that captures the evolution of real-world influencers by accounting for the following factors: (i)  Collective attention, a limited resource influencers compete for~\cite{Weng2012}, tends to decrease over time without new stimuli or maintenance~\cite{LorenzSpreen2019}; (ii) Content creation patterns, traditionally modeled as a homogeneous Poisson process~\cite{BESSI2017459}, are demonstrably shaped by the influencer's popularity~\cite{10.1145/1557914.1557983}; (iii) Post success (and its impact on influencer popularity) is modeled as a random variable potentially dependent on acquired popularity~\cite{Muchnik2013}, user characteristics (competence, experience, attractiveness), and platform feedback~\cite{Jiang2019} (although undisclosed, engagement maximization algorithms are recognized to rely on past popularity and user preferences~\cite{Covington2016}); (iv) Exogenous, platform-independent events, can influence popularity~\cite{Hilgartner1988}.

Beyond introducing the model, our primary contribution is empirical evidence supporting the analytical treatment's core hypotheses. Due to the system's stochasticity, the dynamics do not deterministically converge to a single state. However, under specific conditions, we prove the system's state, described by a Markov process, is ergodic, possessing a long-run invariant distribution. This distribution, while not explicitly formulated, solves a system of partial differential equations. 
This result enables a probabilistic study of popularity emergence and a comparison of influencer distributions by quantifying their emergence probability and duration in privileged positions. 

\section{Preliminary Empirical Analysis}\label{sec:empirical_analysis}
Using real data traces, we seek to characterize the popularity of influential individuals and empirically identify the key elements that contribute to the rise or fall of their prominence over time.

We focus on the social network Facebook\footnote{A parallel analysis of Instagram data has been completed and will be presented in a forthcoming extension of this work.}, where influencers publish content (i.e., posts), share content that followers, as well as other platform users, can view, like, and comment on.
We restrict the analysis to influencers who post primarily on 
chess (as the FIDE score provides an index of competence) and other two popular topics: cars and science.
We created lists of such influencers by actively searching for them on the platforms through hashtags and keywords and by consulting public lists available online.\footnote{e.g., 
\url{https://hireinfluence.com/blog/top-science-influencers/}}
Note that when we use the term influencer, we do not only mean physical individuals but also magazines, organizations, or companies.
For each monitored influencer, we analyzed all the data related to the posts published between January 1, 2014, and May 15, 2024, using the CrowdTangle tool and its API.\footnote{CrowdTangle was a public insights tool from Meta available to researchers. 
\url{https://transparency.meta.com/en-gb/researchtools/other-datasets/crowdtangle/}}
Available data include the number of followers at the publishing time and the time-series of engagement metrics for each post.  
In total, we collected~$1,965,805$ posts from~$111$ influencers. 


The number of followers is a straightforward and widely adopted proxy in the literature \cite{Bakshy2011} for gauging an influencer's popularity at a given time; however, it doesn't fully represent their popularity at that specific point. For example, we observe that the number of followers typically exhibits a purely increasing trend.
Figure~\ref{fig:osn_evidence}~(a) illustrates this pattern, reporting the followers' evolution on Facebook for the influencers who mainly post about chess.
This phenomenon occurs because users have no real incentive to unfollow an influencer. 
Even when influencers stop posting for extended periods, they retain most of their followers. To illustrate this, we examine inactive influencers, that is, those who have not posted on the platform for more than~6 months. Figure~\ref{fig:osn_evidence}~(b) shows the follower's variation after an inactivity period. In most cases, the follower's difference is negligible, with losses (in red) and gains (in green) occurring about equally.

\begin{figure}[h!]
    \centering
    \begin{tabular}{cc}
\includegraphics[height=4cm, valign=b]{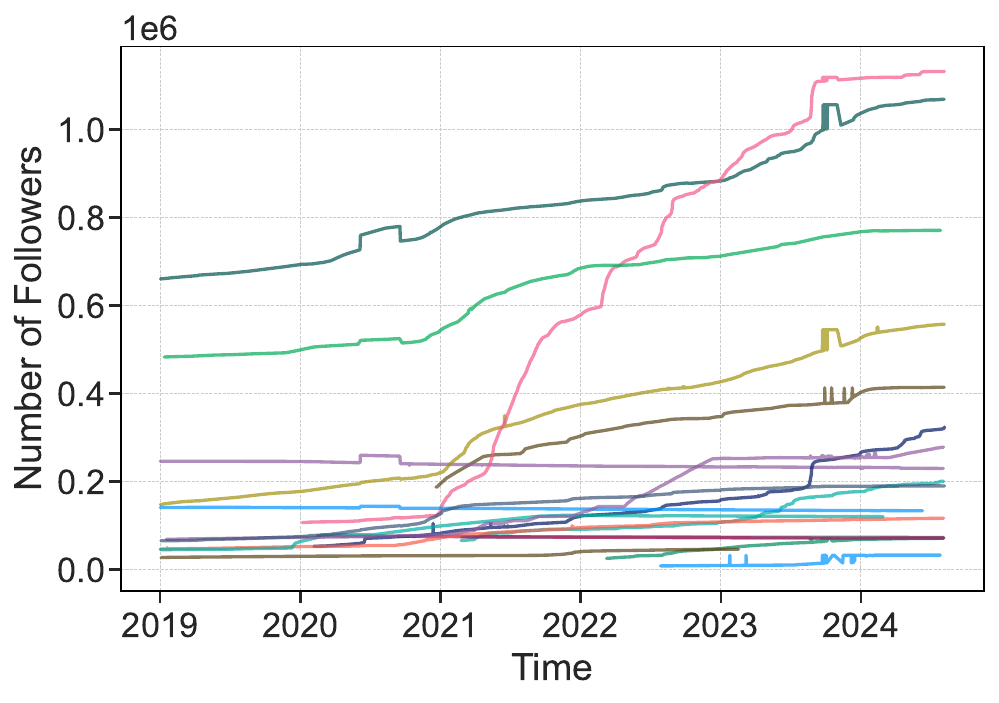}
& 
\hspace{1cm}
\includegraphics[height=3.8cm, valign=b]{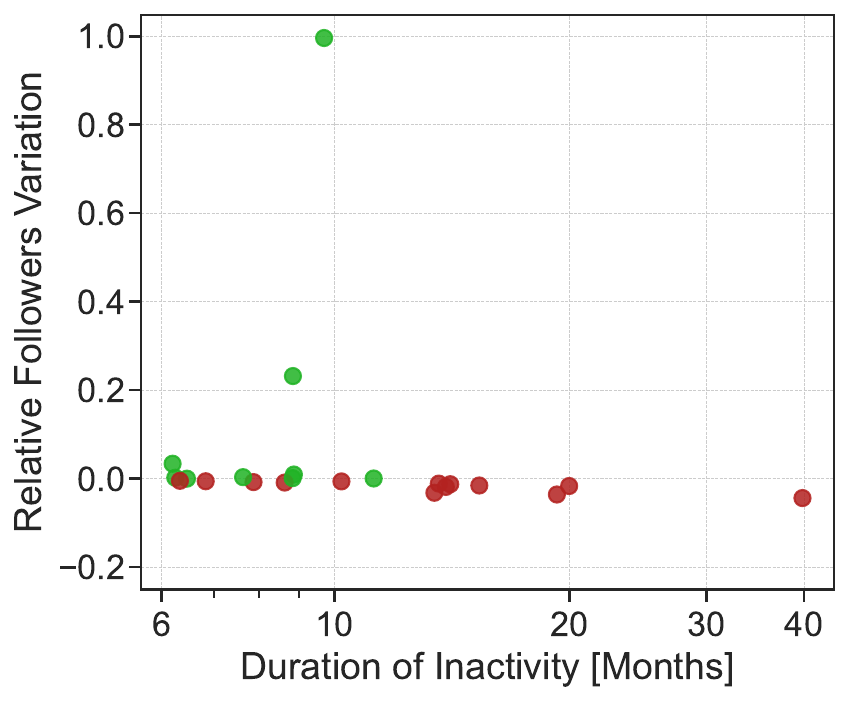} \\
{\scriptsize  $\quad$ (a) }& {\scriptsize $\quad\quad\quad\quad\quad$(b) } \\
\end{tabular}
     \caption{Temporal evolution of the number of followers for \textit{chess} influencers on Facebook (a) and a scatter plot of the relative followers' variation after a period of inactivity longer than 6~months (b), in log x-scale. Each line/dot is an influencer}.  \label{fig:osn_evidence}
\end{figure}

\section{A Mean-Field Model of Popularity}\label{sec:model}
To overcome the intrinsic limitations of the number of followers as popularity 
metric, we propose the following approach.

We consider a finite set of influencers $\mathcal{I}$ and assign to each influencer~$i \in \mathcal{I}$ a scalar variable~$X^{[i]}(t)\in\mathbb{R}$, representing their instantaneous \textit{effective popularity}. 
$X^{[i]}(t)$~is a measure of the level of interest among platform users towards the content posted by influencer~$i\in\mathcal{I}$ at a given time~$t\in\mathbb{R}$. 

Based on the empirical evidence of Section~\ref{sec:empirical_analysis}, we assume that~$X^{[i]}(t)$ is not simply proportional to the number of followers but is governed by autonomous dynamics, determined mainly by the following four factors:
(i) the natural tendency of users to lose or shift their interest;
(ii) the impact of competition from other influencers  on the same topic, diverting attention away from influencer~$i$;
(iii) the posting activity of influencer~$i$, aimed to secure current and attract new audiences; 
(iv) exogenous events affecting the online visibility of influencers, 
such as newsworthy events as reported by traditional media, which expose them to broader audiences beyond their established following.
Formally, we assume $X^{[i]}(t)$ evolves according to the following stochastic differential equation:
\begin{equation}\label{SDE}
	\mathrm{d} X^{[i]}(t)= - \gamma X^{[i]}(t) dt +V_i(t)N^{[i]}_I(dt) 
    + W_i(t) N_E^{[i]}(dt)
\end{equation}
where each of the terms on the right-hand side corresponds to some of the previously described factors.
More in detail, we first assume that~$X^{[i]}(t)$ decreases at a rate~$\gamma$ 
in the absence of additional stimuli. This decaying rate is a consequence of the combined effect of: (i)~lost or shifted interest, and (ii)~competition. Note that we do not model the detailed interactions between influencers, but we represent the combined effect of all competitors on the same topic. 

The second term represents the change in popularity due to the emission of a post. Here~$N^{[i]}_I(t)$ is the counting process induced by the post-emission.
To ensure generality, and as supported by empirical evidence detailed in Section~\ref{subsec:charas}, $N^{[i]}_I$~is modeled as a point process admitting a conditional intensity (also referred to as \textit{stochastic intensity}) 
which may depend on the effective popularity, i.e.,~$\lambda_i(t):= \lambda_i \left(X^{[i]}(t)\right)$.\footnote{$N^{[i]}_I$ can be made a homogeneous Poisson process by setting $\lambda_i(t):= \lambda_i$.}
A post published at time~$t$ typically attracts audience attention, leading to an increase of influencer~$i$'s popularity (hereafter, we will informally refer to such an increase in popularity as \textit{jump}), modeled by random variable~$V_i(t)$.
Note that the distribution of~$V_i(t)$ depends on factors such as post quality and attractiveness, as well as the size and engagement of interacting users. 
Actually the reach of a post is determined through multiple mechanisms: platform-specific algorithmic curation (such as EdgeRank) designed to optimize engagement metrics, subsequent redistribution via user sharing behaviors, and direct exposure to the influencer's established follower base.
Occasionally, posts \textit{go viral}, achieving unexpected success beyond followers. For simplicity and generality of the model, we abstract away the specific mechanisms by which posts attract attention.
Instead, we limit ourselves to statistically characterizing the cumulative effect of such mechanisms, making~$V_i(t)$ simply dependent on the current popularity of influencer~$i$ at time~$t$,~$X^{[i]}(t^-)$.

Finally (third term), external factors can affect influencer popularity. 
Exogenous events may cause popularity jumps, denoted by~$W_i(t)$. Due to their external nature, $W_i(t)$~are reasonably independent of influencer~$i$'s behavior on the platform. The arrival process~$N^{[i]}_E$ can be modeled as a Poisson process. 

\subsection{Statistical Characterization and Main Assumptions}\label{subsec:charas}
We now characterize the key random variables in Eq.~\eqref{SDE} through empirical validation on Facebook data. Then, we formalize our findings as a set of assumptions.

\subsubsection{Post Popularity Jumps.} 
First, we define the success of a post at time $t$ as the number of likes (positive reactions) it receives.
This number corresponds to the popularity jump~$V_i$ at time~$t$. 
Then, we show its dependence on the instantaneous popularity~$X^{[i]}(t^{-})$. We reconstruct the popularity dynamics~$X^{[i]}$ based on Eq.~$\eqref{SDE}$, beginning with a zero initial condition. 
We identify the empirical posting time sequences with the sample paths of the point process~$N^{[i]}_I(dt)$, and we set~$N_E^{[i]}(dt)=0$ due to the lack of information about exogenous events.
We then derive a normalized popularity~$\tilde{X}^{[i]}(t)$ by dividing the popularity by its maximum value. 
This normalization is necessary to aggregate data from various influencers whose absolute popularity levels can differ significantly. Similarly, we use a normalized version of the empirical success~$\tilde{V_i}$ of influencer~$i$'s posts.
Figure~\ref{fig:assumption_cond_expect}(a) shows the average normalized jump conditioned on the instantaneous normalized popularity. We partitioned the range of normalized popularity into~$10$ bins and calculated the mean number of likes for posts falling within each bin. The results, presented for different values of parameter~$\gamma$ appearing in equation \eqref{SDE} ($\gamma \in \{32, 128, 512\}$ days), are demonstrating a steady upward trend with increasing normalized popularity.
As~$\gamma$ decreases (i.e., $1/\gamma$ increases), the dependency becomes smoother, likely due to the increased inertia in influencer 
popularity dynamics.

\begin{figure}[ht]
    \centering
    \begin{tabular}{cc}
        \includegraphics[width=0.41\textwidth, valign=t]{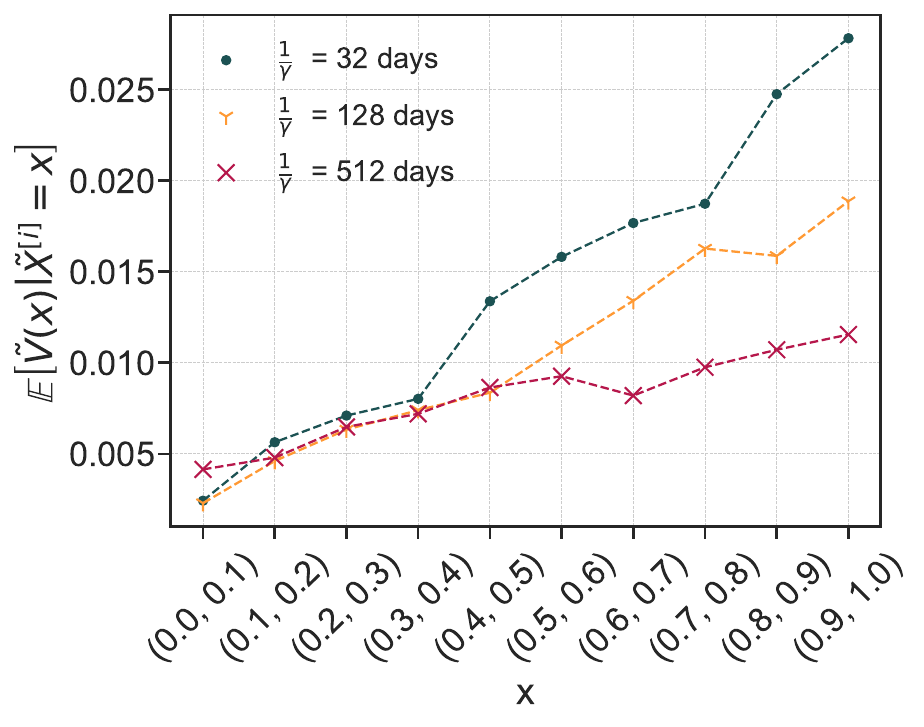}
        \hspace{0.5cm}&\hspace{0.4cm} 
        \includegraphics[width=0.46\textwidth, valign=t]{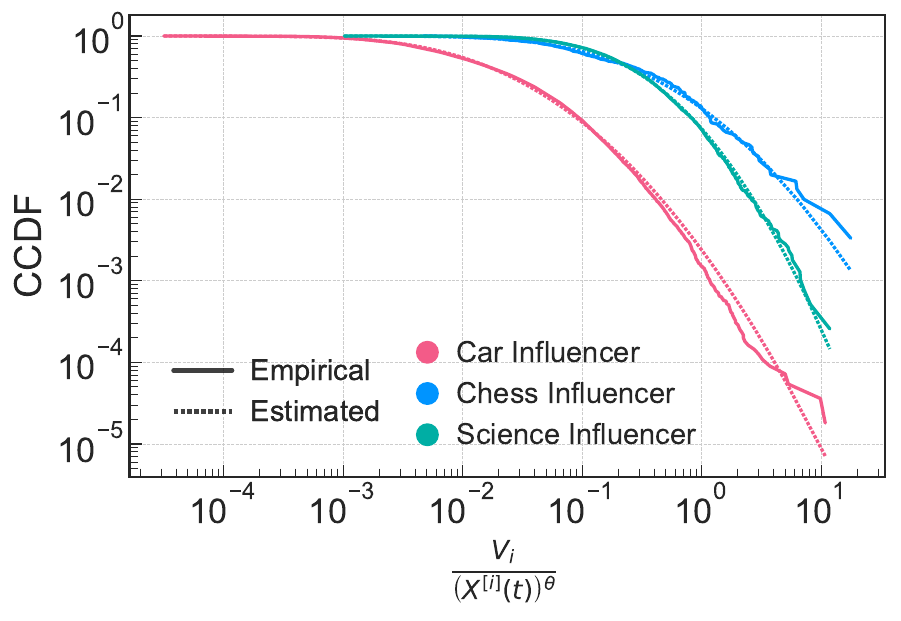} \\
       { \scriptsize $\quad\quad$ (a)}  & { \scriptsize $\quad\quad\quad$ (b) } \\
    \end{tabular}
    \caption{(a) Empirical conditional expectation of the normalized jump as a function of the normalized effective popularity. (b) Empirical and estimated CCDF of the ratio between the \textit{number of likes} and~$(X^{[i]})^{\theta}$ for three Facebook influencers.
   }
    \label{fig:assumption_cond_expect}
\end{figure}

Based on these empirical observations, we assume that~$\mathbb{E}[V_i(t)\mid X^{[i]}(t^-)=x]$ is an increasing function of $x$. In particular, we make the following hypothesis:
\begin{assumption}\label{ass:cond_expectation}
$\mathbb{E}[V_i(t)\mid X^{[i]}(t^-)=x]=\varepsilon + \beta_i x^\theta$,
where~$\epsilon$ is an arbitrarily small positive constant to avoid absorbing states and $\beta_i>0$. 
\end{assumption}
{Parameter~$\beta_i$ is influencer-specific and reflects the influencer's competence on the topic and their ability to create captivating content. 
In contrast, 
we assume $\theta$ to be primarily determined by users' behavior and platform engagement algorithms, and thus influencer independent.}
                                   
It is worth mentioning that the publication of a controversial post might lead to a decrease in popularity (negative jump). However, since these events are statistically rare and hardly detectable in our dataset, we decided to neglect them.

\smallskip

For the conditional distribution~$F_{V_i}(z\mid x)$, we make the simplest possible assumption by considering that~$V_i$ scales proportionally to~$\varepsilon + \beta_i (X^{[i]}(t^-))^\theta$: 

\begin{assumption}\label{ass:salti}
Let~$\hat{V}_i$ be a positive random variable independent from~$X^{[i]}(t^-)$, with unitary mean.
Then $V_i=\left(\varepsilon + \beta_i (X^{[i]}(t^-))^\theta\right) \hat{V}_i$. 
\end{assumption}
It should be noticed that, according to Assumption \ref{ass:salti}, the conditional partition function is given by~$F_{V_i}(z\mid x)= F_{\hat{V}_i}(\frac{z}{\varepsilon+\beta_i x^\theta})$.

\smallskip

Our goal is to determine the distribution of success from empirical data, again considering the number of received likes as a proxy of the post's success. To this end, we have selected a set of \textit{candidate} distributions to be tested, including: exponential, lognormal, and power-law. 
A few parameters need to be first identified. We must differentiate between success-specific parameters (e.g.,~$\beta_i$), which reflect influencer posting behavior, and system-level parameters (i.e.,~$(\gamma$, $\theta)$), which capture broader user and platform dynamics.
While success-specific parameters can be estimated from the data (for each chosen distribution), system-level parameters cannot be directly extracted from traces. Therefore, we have developed a Maximum Likelihood Estimation~(MLE) procedure that, given both a candidate distribution and the pair~$(\gamma, \theta)$, first finds the best success-specific parameters.   Once the parameters have been obtained, we evaluate the quality of the fitting by computing the Kolmogorov distance~$\kappa$ between the synthetic and the empirical distribution. Recall that the Kolmogorov distance between two distributions~$F$ and~$G$ is defined as~$\kappa(F, G) = \sup_{x \in \mathbb{R}} |F(x) - G(x)|$.
Since the lognormal distribution results the best one (i.e., the one minimizing $\kappa$) for over~98\% of the influencers under several choices of pair~$(\gamma, \theta)$ (as better detailed below), we 
decided to consistently adopt the lognormal shape for  
distribution~$\hat{V}_i$.

Recall that a lognormal distribution has two parameters. 
For simplicity, we neglected~$\varepsilon$ and focused on~$\beta_i \hat{V}_i$, which is, as consequence of Assumption~\ref{ass:salti}, a lognormal random variable with mean $\beta_i$ (recall $\mathbb{E}[\hat{V}] = 1$) and coefficient of variation $CV_i$. The above MLE procedure can be used to estimate both parameters for each influencer.
In Figure~\ref{fig:assumption_cond_expect}~(b), we report the result of the fitting for three influencers, one for each of the considered domains, comparing the empirical Complementary Cumulative Distribution Function~(CCDF) with the sythetic lognormal distribution for a particular choice of~$(\gamma,\theta)$, as discussed below.

Finally, we argued that system-level parameters ($\gamma$ and $\theta$) cannot be directly measured. However, we can determine them indirectly. First note that, given a choice of~$(\gamma, \theta)$ and performing the~MLE, it is possible to compute the Kolmogorov distance~$\kappa_i (\gamma, \theta)$ between the synthetic and empirical distribution. This provides a measure of the goodness of the fit. 
At this point, we select the best system-level parameters as those that minimize the cumulative Kolmogorov distance (over all influencers),
namely: $(\gamma^*, \theta^*) = \min_{\gamma, \theta}\sum_i \kappa_i (\gamma, \theta)$.

Figure~\ref{fig:new-heatmap}  reports values of $\sum_i \kappa_i (\gamma, \theta)$  for different choices of $(\gamma, \theta)$ . Observe that the best fitting of the empirical data is obtained for  $\theta^*=0.7$ and $\gamma^*=1/128$~[1/days]. These are also the parameters used in Figure~\ref{fig:assumption_cond_expect}~(b).

In Table \ref{tab:elo_beta}, we present our estimates of $\beta_i$  
for five prominent influencers within the chess domain, alongside their Elo 
ratings (for standard chess), widely recognized as the definitive measure of chess playing strength.
Notably, the correlation between $\beta_i$ and Elo is not particularly 
strong. This suggests that technical expertise in the topic is just one of several factors contributing to an influencer's success on social media platforms.

\begin{figure}[htbp]
    \centering
    \begin{minipage}{0.58\linewidth}
        \centering
        \includegraphics[width=0.8\textwidth]{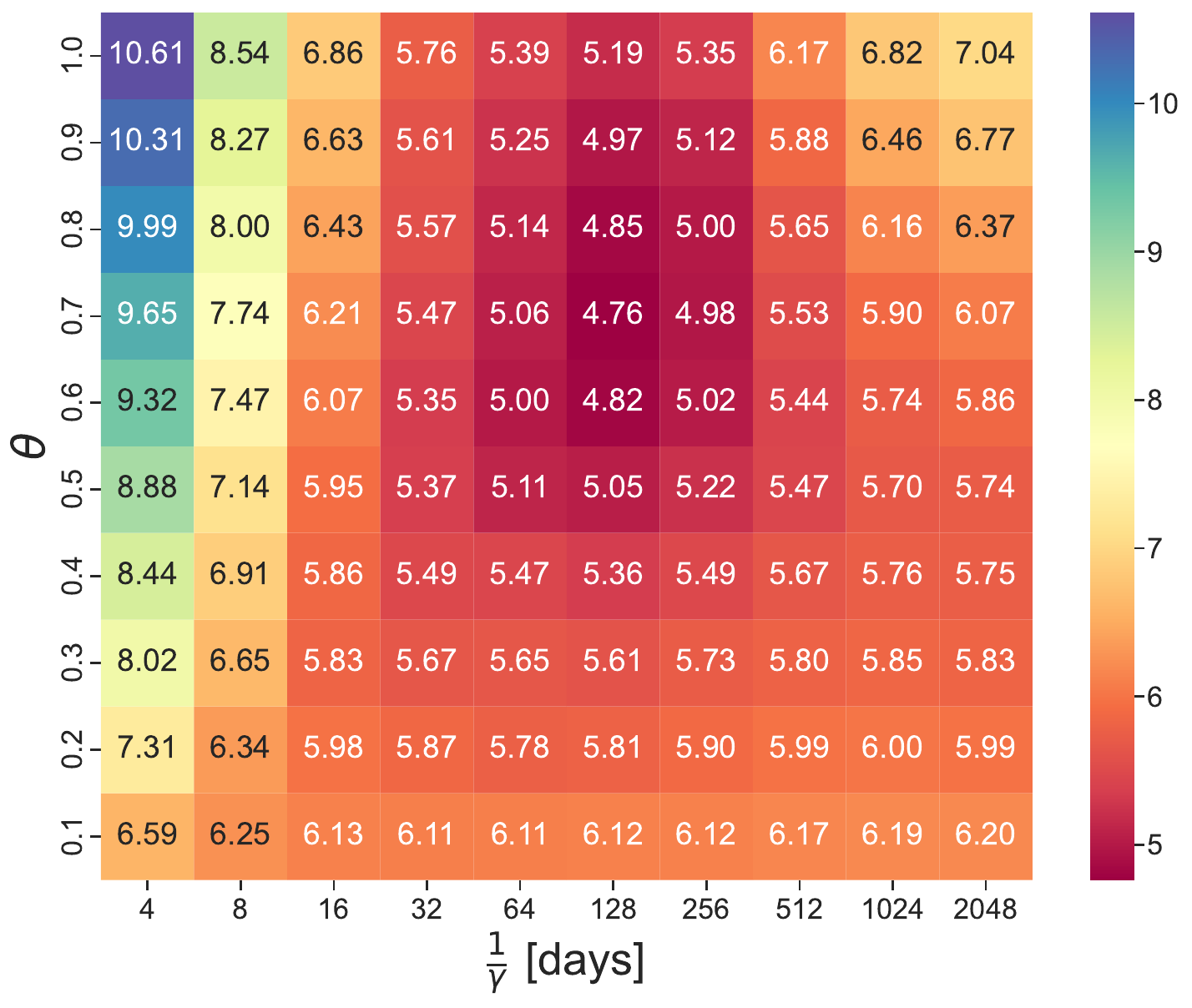}
        \caption{Cumulative Kolmogorov distance between lognormal fit and empirical distribution.}
        \label{fig:new-heatmap}
    \end{minipage}
    \hfill
    \begin{minipage}{0.38\linewidth}
        \centering
        \captionof{table}{Comparison between the influencer's estimated~$\beta_i$ and their FIDE Elo score.}
        \label{tab:elo_beta}
            \scalebox{0.9}{ 
                \begin{tabular}{l|c|c}
                    \toprule
                    \textbf{Influencer} & \textbf{Elo} & \textbf{$\beta_i$} \\
                    \midrule
                    Judit Polgar & 2675 & 0.424 \\
                    Alexandra Botez & 2044 & 0.207 \\
                    Magnus Carlsen & 2837 & 0.871 \\
                    Fabiano Caruana & 2776 & 1.194 \\
                    Tania Sachdev & 2396 & 1.953 \\
                    \bottomrule
                \end{tabular}
            }

    \end{minipage}
\end{figure}

\subsubsection{Intensity of the Posting Process.}

The posting patterns of influencers represent another critical factor shaping popularity dynamics.
Many studies assume the posting process to be a homogeneous Poisson process, and indeed this simple process has been found to describe well most of the activity~\cite{Crane2008}.
However, data traces exhibit pronounced bursty patterns, indicating that the Poisson assumption does not always align with real-world data.

Recall that in a Poisson process, the number of events occurring within a fixed time window is distributed as a Poisson random variable. The index of dispersion of a Poisson distribution (i.e., the ratio between the variance and the mean of the distribution) is equal to~1.
We computed the index of dispersion~$D$ of the number of posts in one-week windows, for each influencer. In Figure~\ref{fig:dispersion_rate}~(a), we report the distribution of~$D$, whose values are typically much larger than~1, suggesting that the posting process is more complex than Poisson. 

Let $\{\tau_n\}_{n}$ be the sequence of inter-post time, i.e., the time interval between two consecutive posts by the same influencer.
Figure~\ref{fig:dispersion_rate}~(b) shows the empirical estimate of the reciprocal of the average next inter-post time, conditionally over the normalized (w.r.t. the maximum) instantaneous popularity~$\tilde{X}^{[i]}(t)$ of all influencers.
This estimation represents an approximation of the stochastic intensity~$\lambda(t) | X^{[i]}(t)=x$. Even if the trend is noisier than that in Figure~\ref{fig:assumption_cond_expect}, there is a clear positive correlation between the posting rate and the effective popularity. The correlation becomes weaker as the value of~$\gamma$ decreases ($1/\gamma$ increases), due to the larger system inertia. 
Motivated by these observations, we make the following hypothesis.

\begin{figure}[h!]
    \centering
    \begin{tabular}{cc}
        \includegraphics[width=0.435\textwidth, valign=t]{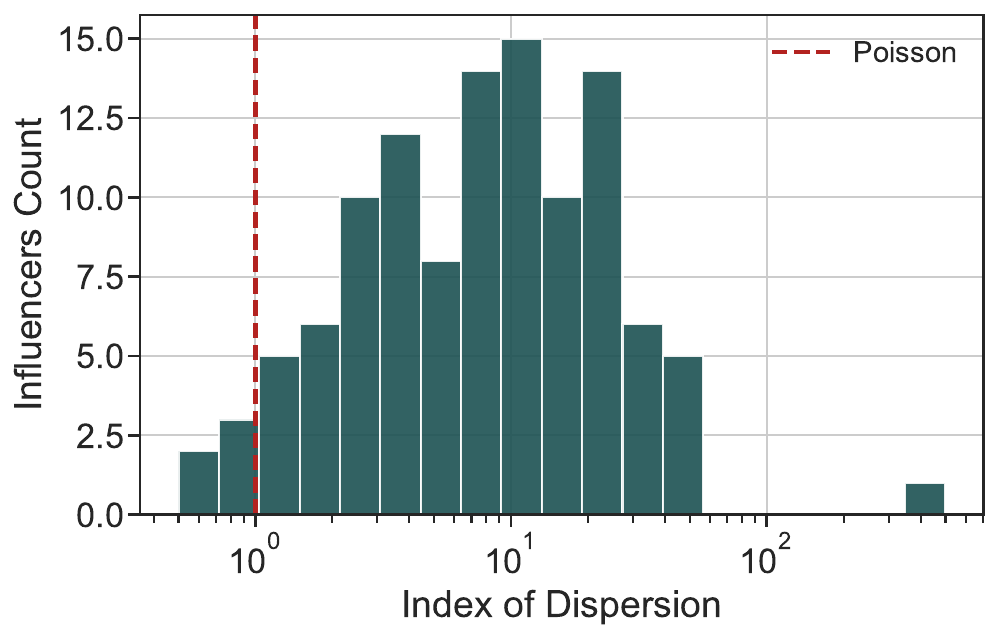} 
        \hspace{0.55cm}&\hspace{0.55cm}
        \includegraphics[width=0.41\textwidth, valign=t]{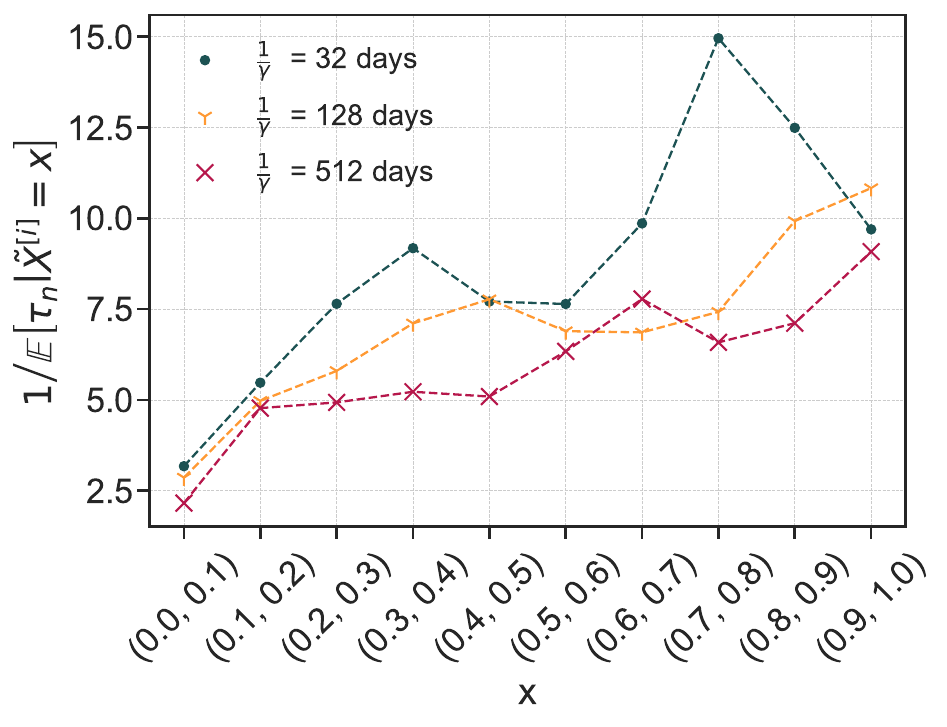} \\
        {\scriptsize $\quad\quad$ (a) } & {\scriptsize $\quad\quad\quad$ (b) } \\
    \end{tabular}
    \caption{(a) Distribution of the influencer's index of dispersion of the number of arrivals in a one-week window over the entire time horizon. (b) Instantaneous stochastic rate for different choices of~$\gamma$, measured in posts/day.}
    \label{fig:dispersion_rate}
\end{figure}

\begin{assumption}\label{ass:posting_rate} For any influencer $i\in\mathcal{I}$, the conditional stochastic intensity of $N^{[i]}_I(dt)$ follows the law: $\lambda^{[i]}(t)\mid \{X^{[i]}(t)=x\}= \lambda_{0,i} + \lambda_{1,i} x^{\phi_i}$.
\end{assumption}

This assumption maintains reasonable flexibility while capturing the observed pattern that influencers' posting rate tends to increase with growing popularity. However, the specific parameters characterizing this relationship vary among influencers, reflecting individual behavioral factors.

\subsection{Discussion on Parameters and their Relationships}
We summarize all of the model parameters in Table~\ref{tab:sce}.
Note that $\beta_i$ can be regarded as the average strength of the content generated by influencer $i$, and it subsumes two different aspects: (i) the intrinsic average quality of their posts; (ii) the level of engagement of their audience, which depends on behavioral features of  $i$, as well as, from intrinsic characteristic of their audience. 

Parameter $\theta$, instead, accounts for the effect of the platform's engagement mechanism(s)  to determine the audience of individual posts (and for such a reason it is assumed independent of $i$).
Parameters $\lambda_{0,i}$, $\lambda_{1,i}$ and   $\phi_i$  are tightly related to behavioral traits of influencer $i$ (like impulsivity).  At last, parameter $CV_i$ quantifies the variability in post success.

It is important to note that these parameters exhibit interdependencies. For example, we can anticipate an inverse relationship between $\beta_i$
and the posting rate $\lambda_i$ (determined by $\lambda_{0,i}$, $\lambda_{1,i}$ and $\phi_i$):  if an influencer posts too frequently, 
this typically compromises content quality due to constrained resources and time, while potentially fatiguing their audience, leading to a reduction of post effectiveness (represented by $\beta_i$). The above natural feedback mechanism inherently discourages influencers from adopting overly aggressive posting strategies.

Furthermore, a nuanced relationship exists between the popularity decay rate~$\gamma$ and content publication parameters. 
When $\gamma$ is large -- whether due to natural decline of audience attention or competition from alternative content -- influencers are expected to increase their posting frequency (within the previously discussed constraints) 
to prevent the loss of influence and visibility.
Conversely, a smaller popularity decay
rate reduces the urgency to post frequently, as popularity remains more stable over time, weakening the correlation between posting rate and instantaneous popularity.

\begin{table}[h!]
    \centering
    \caption{Parameters summary and reference scenario for the numerical analysis.}
    \label{tab:sce}
   \resizebox{1\columnwidth}{!}{ \begin{tabular}{cp{9cm}c}
        \toprule
        \textbf{Symbol} & \textbf{Description} & \textbf{Reference value} \\
        \midrule
        $\beta_i$ & ability of influencer $i$, $i \in \mathcal{I}$ & $0.9^{i-1}$ \\
        $\gamma$ & popularity discount rate & 1/64 (days) \\
        $\theta$ & exponent for non-linear dependence of jumps on popularity & 0.6 \\
        $\epsilon$ & small constant to avoid absorption in zero & 0.01 \\
        $\lambda_{0,i}$ & constant term of posting rate & 4 (posts/day) \\
        $\lambda_{1,i}$ & scale factor of the variable term of posting rate & 0 \\
        $\phi_i$ & exponent of the variable term of posting rate & 0 \\
        $CV_i $ & coefficient of variation of the distribution $\hat V_i$ & 4 \\
        \bottomrule
    \end{tabular}}
\end{table}

\subsection{Theoretical Results}\label{sec:theory}
In this section, we carry out a complete probabilistic analysis of the effective popularity~$X(t)$ as specified in Eq.~\eqref{SDE}. 

First, observe that Eq.~\eqref{SDE} defines a homogeneous Continuous Time Markov Process over a general space state.  We denote with $\{X_n\}_{n\in \mathbb{N}}$ the embedded Discrete Time Markov Process  (DTMP), obtained by sampling $\{X(t)\}_{t\in \mathbb{R}_+}$ at jump times, i.e., $X_n=X(T_n^-)$, where  $\{T_n\}_{n\in \mathbb{N}}$ are the ordered points of $N_I+N_E$.   We refer the reader to \cite{Twedie-Meyn} for a comprehensive analysis of Markov processes over general space state.
Denoted with  $\mathcal{B}(\mathbb{R}_+)$ the family  of Borellian sets  over $\mathbb{R}_+$, 
and with $P^n(x,A)=\mathbb{P}(X_n\in A\mid X_0=x)$ for $A\in  \mathcal{B}(\mathbb{R}_+)$, $n \in \mathbb{N}\setminus \{0\}$,  the $n$-step  Markov kernel of the DTMP, we have the following result. 
	\begin{theorem}\label{lemma-ergo}
  Let Assumption \ref{ass:salti}-\ref{ass:posting_rate} be satisfied with $\epsilon>0 $, $\lambda_0>0$. If $\theta+\phi<1$, or $\theta+\phi=1$ and $\beta(\lambda_0+\lambda_1)$ is sufficiently small with respect to $\gamma$, then
		$\{X_n\}_n$  admits
    a unique stationary probability measure, denoted by $\pi$.
     Moreover, for any  initial condition, as $n$  grows large 
     \mbox{$\sup_{A\in \mathcal{B}(0,\infty)}| P^n(x, A)- \pi(A)|\to 0$}.
	\end{theorem}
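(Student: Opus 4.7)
The plan is to apply the classical Meyn--Tweedie theory of Markov chains on a general state space \cite{Twedie-Meyn}, which yields geometric ergodicity (and hence the claimed total variation convergence together with uniqueness of $\pi$) once one has established a Foster--Lyapunov geometric drift toward a small set, jointly with $\psi$-irreducibility and aperiodicity. The starting point is the pathwise one-step recursion
\begin{equation*}
X_{n+1} = (X_n + J_n)\, e^{-\gamma \tau_n},
\end{equation*}
where $\tau_n := T_{n+1}-T_n$ and $J_n$ is the jump at $T_n$: by independent thinning of the superposition $N_I+N_E$ (of instantaneous rates $\lambda(x)$ and $\mu$, writing $\mu$ for the rate of $N_E$), $J_n = V_n$ with probability $\lambda(x)/(\lambda(x)+\mu)$ and $J_n=W_n$ otherwise; the conditional hazard of $\tau_n$ given $X_n=x$ and $J_n$ is $\lambda((x+J_n)e^{-\gamma u})+\mu$.

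For the drift, I would take the test function $h(x):=1+x$ and start from the pathwise identity $h(X_{n+1}) - h(x) = J_n - \gamma \int_0^{\tau_n} (x+J_n)\,e^{-\gamma u}\,\mathrm{d}u$, obtained by integrating the deterministic decay. Taking conditional expectations yields $\mathbb{E}[J_n\mid x] = \tfrac{\lambda(x)}{\lambda(x)+\mu}(\varepsilon+\beta x^{\theta}) + \tfrac{\mu}{\lambda(x)+\mu}\mathbb{E}[W]$, while the integral term equals $\mathbb{E}[(x+J_n)(1-e^{-\gamma\tau_n})/\gamma\mid x]$. An asymptotic analysis as $x\to\infty$ gives a drift of order $\beta x^{\theta} - c\,\gamma\, x^{1-\phi}$ for some constant $c>0$ depending on $\lambda_0,\lambda_1,\mu$, which is strictly negative and of order $h(x)$ when $\theta+\phi<1$; in the boundary case $\theta+\phi=1$ (most transparent for $\phi=0$, $\theta=1$, where the drift equals $[\beta(\lambda_0+\lambda_1)-\gamma]\,x/(\lambda_0+\lambda_1+\mu+\gamma)$ at leading order), the same analysis identifies the sufficient smallness condition $\beta(\lambda_0+\lambda_1)<\gamma$. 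Outside a sufficiently large compact set $C=[0,M]$ this yields $\mathbb{E}[h(X_{n+1})\mid X_n=x]\le \rho\,h(x)$ for some $\rho\in(0,1)$, and on $C$ the drift is bounded by a constant.

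For the topological side, I would observe that for $x\in C$ the law of $X_{n+1}$ admits a density on $(0,\infty)$: the jump $J_n$ has a positive density (the lognormal contribution from $\hat V$ of Assumption~\ref{ass:salti}, weighted by the post probability which is bounded below on $C$ because $\lambda_0>0$), $\tau_n$ is absolutely continuous, and the smooth map $(J_n,\tau_n)\mapsto (x+J_n)e^{-\gamma\tau_n}$ transports this into a Lebesgue density bounded below on some interval $[a,b]\subset(0,\infty)$ uniformly in $x\in C$. This provides the minorization $P(x,\cdot)\ge \eta\,\nu(\cdot)$ for $x\in C$ with a probability measure $\nu$ giving positive mass to $C$, so $C$ is a small set, the chain is Lebesgue-irreducible, and it is aperiodic. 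Combining the drift with these ingredients, the standard Meyn--Tweedie theorems deliver a unique invariant probability $\pi$ and $\sup_{A\in\mathcal{B}(0,\infty)}|P^n(x,A)-\pi(A)|\to 0$ at a geometric rate.

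The main obstacle is the drift computation in the boundary case $\theta+\phi=1$ with generic $\phi\in(0,1)$: since $\lambda(X(t))$ decays along the trajectory between jumps, $\tau_n$ is not exponential and $\mathbb{E}[e^{-\gamma\tau_n}\mid x]$ admits no closed form, so the leading coefficient in the negative part of the drift (and thus the sharp form of the condition $\beta(\lambda_0+\lambda_1)<\gamma$) must be extracted via a controlled asymptotic expansion in $x$, or by stochastic-ordering bounds comparing $\tau_n$ with exponential waiting times at rates $\lambda(x+J_n)+\mu$ and $\lambda_0+\mu$. Non-explosion of $\{T_n\}$ is not an issue, since the overall intensity grows only sub-linearly in $x$ (as $x^{\phi}$ with $\phi\le 1$) while the deterministic decay is linear.
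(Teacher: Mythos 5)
Your overall strategy --- work with the embedded chain $X_{n+1}=(X_n+J_n)\,e^{-\gamma\tau_n}$, use a linear test function, bound the expected jump by $\varepsilon+\beta x^{\theta}$ and bound the expected contraction $\mathbb{E}[1-e^{-\gamma\tau_n}\mid x]$ from below by comparing $\tau_n$ with an exponential clock at the post-jump intensity, then combine a Foster--Lyapunov drift with a minorization/density argument for irreducibility and aperiodicity before invoking \cite{Twedie-Meyn} --- is essentially the route the paper takes. The paper uses $\mathcal{L}(x)=x$, the same stochastic-ordering bound $\mathbb{E}[e^{-\gamma\Delta T_n}]\le \Lambda/(\Lambda+\gamma)$ with $\Lambda=\lambda_0+\lambda_1(x+v)^{\phi}$, obtains the drift $-c\,x^{1-\phi}+(\beta+\delta)x^{\theta}$, and proves Lebesgue-irreducibility and strong aperiodicity via an explicit $k$-step lower bound on the kernel, concluding by Theorem~11.3.4 of \cite{Twedie-Meyn}.

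The one genuine flaw is your drift inequality. You assert that the drift $\beta x^{\theta}-c\,\gamma\,x^{1-\phi}$ is ``strictly negative and of order $h(x)$'' when $\theta+\phi<1$ and conclude the geometric condition $\mathbb{E}[h(X_{n+1})\mid X_n=x]\le\rho\,h(x)$, $\rho<1$, outside a compact set. For $\phi>0$ this is false: the restoring term is only $O(x^{1-\phi})=o(x)$, so a multiplicative drift for $h(x)=1+x$ cannot hold; and since the lognormal jumps of Assumption~\ref{ass:salti} have no exponential moments, repairing it with an exponential test function is not straightforward either. What does hold --- and is all the theorem claims --- is the non-geometric condition $\mathbb{E}[h(X_{n+1})-h(X_n)\mid X_n=x]\le -1+b\,\ind_C(x)$ for a suitable compact $C$, which together with your minorization (making $C$ small, hence petite) gives positive Harris recurrence, uniqueness of $\pi$, and total-variation convergence without a rate; this is exactly the paper's argument. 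So either drop the claim of a geometric rate or restrict it to $\phi=0$. A second, minor point: your minorization is stated only for $x\in C$, whereas $\psi$-irreducibility requires reaching a set charged by $\nu$ from every initial $x\in(0,\infty)$; the drift supplies this, but the step should be made explicit (the paper does so with a $k$-step lower bound on $P^k(x,\cdot)$ over $[\kappa_I,\kappa_S]$).
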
	
The detailed proof is reported in the Appendix.
In short, we show  $\mu_{\text{Leb}}$-irreducibility\footnote{$\mu_{\text{Leb}}$ denotes the Lebesgue measure}, as well as strong aperiodicity for the embedded  DTMC $\{X_n\}_n$, using a rather direct approach.
Then, Harris recurrence can be proved using standard drift arguments, i.e., by adopting $\mathcal{L}(X_n)=X_n$ as Lyapunov function \cite{Twedie-Meyn}.

Theorem~\ref{lemma-ergo} guarantees that, under the appropriate conditions on system parameters, the Markov process admits a unique stationary probability measure; the process is ergodic, implying convergence to the stationary distribution regardless of the initial condition.
This means the Markov chain is ergodic, and time averages of observables will converge to the expected values under the stationary distribution.

At last, with rather standard arguments, it can be shown that the ergodicity of~$\{X_n\} _n$ implies  the ergodicity of~$\{X(t)\}$
(i.e. a unique stationary distribution~$\Pi(A)$  exists and~$P^t(x,A)\to \Pi(A)$  for every~$x\in \mathbb{R}^+$  and~$A\in \mathcal{B}(\mathbb{R}^+)$.
		   
Denoting by~$F(y,t):=\mathbb{P}(X(t)\leq y)$ the partition function 
of~$X(t)$,~$f(x, t)$ the associated probability density and~$\mu$ the rate of process~$N^{[i]}_E$, we have:
\begin{theorem}
\label{thm:Kolomgorov} Under the assumptions of Theorem \ref{lemma-ergo},
Function ${F}(y,t)$ satisfies the following Partial Integro-Differential Equation
\begin{align}\label{diff-eq}
\frac{\partial F(y,t)}{\partial t} =  \gamma y \frac{\partial F(y,t)}{\partial y}-  \int   [\lambda(t)\bar F_V( y-x \mid x) +
\mu  \bar F_W( y-x)]\frac{\partial F(x,t) }{\partial x}     \mathrm{d}x
\end{align}
where  $\bar F_V(\cdot\mid x)$ and $\bar F_W(\cdot)$ are respectively the conditional CCDF of $V$ and the CCDF of $W$, respectively.
Moreover, there exists a unique stationary distribution $F(y)$ satisfying the following ordinary integro-differential equation:
\begin{equation}\label{stat-diff-eq}
    \gamma y \frac{\mathrm{d} F(y)}{\mathrm{d} y}=\int  [ \lambda(t)\bar F_V( y-x \mid x)+ \mu\bar  F_W( y-x ) ] \frac{\mathrm{d}  F(x) }{\mathrm{d}  x}     \mathrm{d}x.
\end{equation}
\end{theorem}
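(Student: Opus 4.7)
The plan is to view $\{X(t)\}_{t\ge 0}$ as a piecewise deterministic Markov process in the sense of Davis: between jumps the state obeys the ODE $\dot{X}=-\gamma X$, while at the points of $N_I$ (resp.\ $N_E$) the state experiences a positive jump of law $F_V(\cdot\mid X(t^-))$ (resp.\ $F_W$), with combined stochastic intensity $\lambda(X(t))+\mu$. From this description I would first write the infinitesimal generator, acting on sufficiently smooth compactly supported $\varphi:\mathbb{R}_+\to\mathbb{R}$, as
\begin{equation*}
(\mathcal{A}\varphi)(x) = -\gamma x\,\varphi'(x) + \lambda(x)\!\int\![\varphi(x+v)-\varphi(x)]\,dF_V(v\mid x) + \mu\!\int\![\varphi(x+w)-\varphi(x)]\,dF_W(w),
\end{equation*}
and then use Dynkin's formula, $\tfrac{d}{dt}\mathbb{E}[\varphi(X(t))]=\mathbb{E}[(\mathcal{A}\varphi)(X(t))]$, as the weak form of the forward equation.

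To obtain equation (2) I would specialize the test function to $\varphi(x)=\mathbf{1}_{\{x\le y\}}$, justified by a standard mollification argument together with dominated convergence. The drift term $-\gamma x\varphi'(x)$ collapses under expectation to $\gamma y\,f(y,t)=\gamma y\,\partial_y F(y,t)$, since $\varphi'$ concentrates at $y$. The internal-jump term gives $-\int_0^y\lambda(x)\,\bar{F}_V(y-x\mid x)\,f(x,t)\,dx$, because a jump from $x\le y$ exits $\{X\le y\}$ iff $v>y-x$; the exogenous jump term is analogous with $\mu\bar{F}_W$. Summing yields (2). A necessary ingredient is that the law of $X(t)$ admits a density $f(\cdot,t)=\partial_x F(\cdot,t)$; this follows for $t>0$ from the regularity of the jump kernels (lognormal $V_i$ under Assumption~\ref{ass:salti}, together with the smoothing contributed by $W$) and the absolutely continuous character of the deterministic flow in a neighbourhood of any reachable state.

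The stationary equation (3) is then obtained by setting $\partial_t F = 0$ in (2). Existence and uniqueness of a stationary $F$ do not require a separate argument: Theorem~\ref{lemma-ergo} guarantees that the embedded DTMP admits a unique invariant measure $\pi$, and a standard Palm/inversion argument lifts this to a unique invariant law $\Pi$ for the continuous-time process (as already indicated in the text preceding the theorem). Any such invariant law must solve the time-independent version of the weak forward equation, hence (3).

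The hard part will be the density/regularity step: the statement (2) is written in strong (pointwise) form, whereas Dynkin's formula only delivers it against test functions. To pass to the strong form, one must show that $F(\cdot,t)$ is absolutely continuous for every $t>0$ and that the density inherits enough regularity to make the integral in (2) well defined; the smoothing effect of lognormal $V_i$ and of the exogenous jumps is the key ingredient here, but establishing it rigorously requires a careful decomposition of trajectories according to the number of jumps in $(0,t]$. The remaining steps (Fubini in the jump integral, differentiation under the expectation sign) are routine once the relevant integrability estimates implied by the drift condition used in the proof of Theorem~\ref{lemma-ergo} are invoked.
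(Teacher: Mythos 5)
Your proposal is correct and reaches the same equation, but it is packaged differently from the paper's argument. The paper does not pass through the generator at all: it writes $F(y,t+\Delta t)=F(y,t)+\Delta F^+(y,t)-\Delta F^-(y,t)$, where $\Delta F^+$ is the probability of entering $\{X\le y\}$ during $[t,t+\Delta t)$ (which to first order can only happen by exponential decay with no jump, contributing $\gamma y\,\partial_y F(y,t)\,\Delta t$) and $\Delta F^-$ is the probability of leaving it (which requires exactly one jump from some $x<y$ overshooting $y$, contributing $\Delta t\int_{x<y}\bar F_Z(y-x\mid x)(\lambda(x)+\mu)\,\mathrm{d}F(x,t)$), then divides by $\Delta t$ and takes limits. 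Your route — extended generator of the piecewise deterministic Markov process, Dynkin's formula, and mollified indicator test functions — computes exactly the same probability flux across the level $y$, and your term-by-term identification (the drift collapsing to $\gamma y f(y,t)$, the jump operator producing $-\lambda(x)\bar F_V(y-x\mid x)-\mu\bar F_W(y-x)$ for $x\le y$ and vanishing for $x>y$) is right. What the paper's direct balance buys is that it is elementary and avoids any discussion of the domain of the generator; what your version buys is that it sits inside standard PDMP theory and makes explicit the one genuinely delicate step, namely that the law of $X(t)$ must be absolutely continuous with a usable density $f(\cdot,t)$ before the weak identity can be read in the strong form \eqref{diff-eq} — a point the paper dispatches in one sentence just before its derivation by asserting absolute continuity of $P^t(x,\cdot)$ with respect to Lebesgue measure. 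Your treatment of the stationary equation (set $\partial_t F=0$, with existence and uniqueness inherited from Theorem~\ref{lemma-ergo} via the lift to the continuous-time process) coincides with the paper's. One cosmetic remark: the factor written as $\lambda(t)$ in the theorem statement is really the state-dependent intensity $\lambda(x)$ inside the integral, as both your derivation and the paper's appendix make clear.
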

Theorem~\ref{thm:Kolomgorov}  allows the evolution of the distribution function~$F(y,t)$ over time to be traced through a partial integro-differential equation.
In addition, the result also provides the stationary distribution, described through an Ordinary Integro-Differential Equation, which describes the system at equilibrium.
Again, the proof is omitted due to space constraints and will appear in a forthcoming version of the paper.
To corroborate our results, we will present in Section~\ref{sec:num_results} a comparison between the simulation of the Markov process and the theoretical distributions obtained through Theorem~\ref{thm:Kolomgorov}.

\section{Numerical Analysis}\label{sec:num_results}

In this section, we use our model to explore how key performance indicators depend on various system parameters. We focus on a reference scenario derived from our empirical data, and perform a sensitivity analysis by varying one parameter at a time. This allows us to investigate what-if scenarios and better understand popularity dynamics and influencer competition.

\subsection{Metrics and Sensitivity Analysis}
We consider a set of five virtual influencers competing for user attention, $\mathcal{I} = \{1,2,3,4,5\}$. 
We set $\beta_i = 0.9^{i-1}$ to introduce a systematic variation in their ability to garner user engagement.
Note that, although the first influencer is expected to be the most popular one on average, the model allows for the emergence of any influencer as the dominant actor, at least temporarily.

To quantitatively assess the competitive dynamics and performance hierarchy among these influencers, we introduce the following two key metrics:

\begin{itemize}
\item {\bf First place probability}, defined as $\pi_1^{[i]} = \mathbb{P}(X^{[i]} \geq X^{[j]}, \forall j \neq i)$, which is the probability that influencer $i$ achieves the highest popularity under stationary conditions. Conceptually, it can also be interpreted as the long-term fraction of time during which the given influencer occupies the top position in the popularity ranking.

\item {\bf First place average stay}, denoted with $S_1^{[i]}$, which is the average sojourn time of influencer $i$ in the first place (in stationary conditions),
providing insight into the temporal stability of its dominant status.
\end{itemize}

Inspired by our empirical analysis of data traces, we consider a reference scenario whose parameters are summarized in Table \ref{tab:sce}. 
We remark that in our reference scenario we have made a few simplifying assumptions to obtain a simple baseline case:
(i) the posting process is the same for all influencers (parameters
$\lambda_0, \lambda_1, \phi$ are now independent of $i$);
(ii) the posting process is a simple Poisson process ($\lambda_1 = 0$);
(iii) while the mean number of likes obtained by a post is different for each influencer, the coefficient of variation $CV$
of the lognormal distribution is here assumed to be the same for all influencers.

Figure~\ref{fig:pdfx} shows, on a log-log scale, the stationary Probability Distribution Function (PDF) of $X^{[i]}$ for the five considered influencers, obtained by simulation and by numerically solving~\equaref{stat-diff-eq}. The perfect match between simulation and analysis cross-validates both approaches to obtain the
stationary distribution of influencers' popularity.

Starting from the baseline case, Figure \ref{fig:gam} explores the impact of  popularity decay rate $\gamma$ on $\pi_1^{[i]}$ (left plot) and
$S_1^{[i]}$ (right plot). As expected, as $\gamma$ diminishes ($1/\gamma$ increases), the top influencer tends to monopolize users' attention, since its popularity decays slower, reflecting the cumulative effect of a large number of posts, which tends to concentrate around the mean (determined by intrinsic ability $\beta_i$). The opposite is true for large $\gamma$ (small $1/\gamma$), where we observe the opposite regime in which all influencers demonstrate comparable probabilities of achieving dominance, as popularity dynamics in this case are principally governed by the stochastic success of individual posts.

\begin{figure}[!ht]
    \centering
    \includegraphics[width=0.45\linewidth]{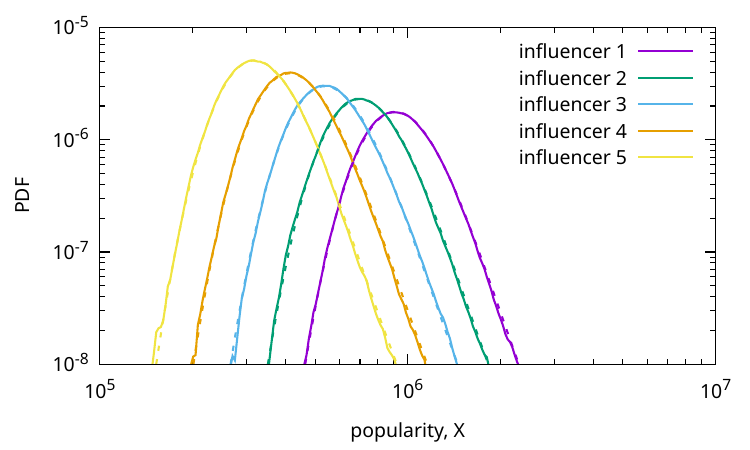}
    \caption{Popularity distribution of five influencers in the baseline scenario. Comparison between simulation results (solid) and analytical results (dashed lines).}
    \label{fig:pdfx}
\end{figure}

\begin{figure}[!ht]
    \centering
    \includegraphics[width=0.45\linewidth]{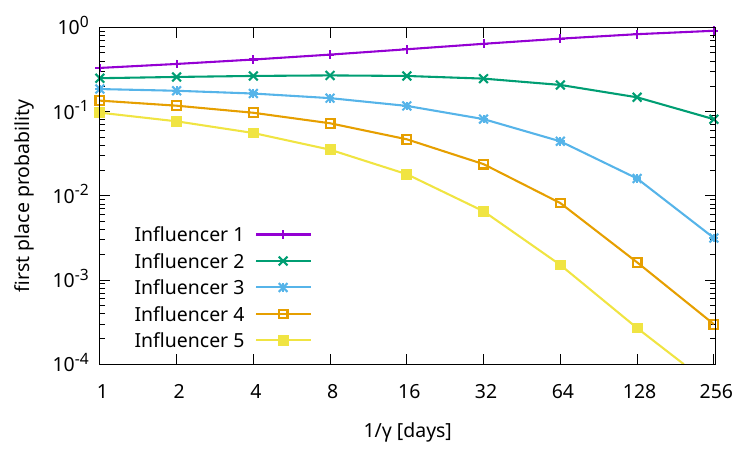}
    \hspace{8mm}
    \includegraphics[width=0.45\linewidth]{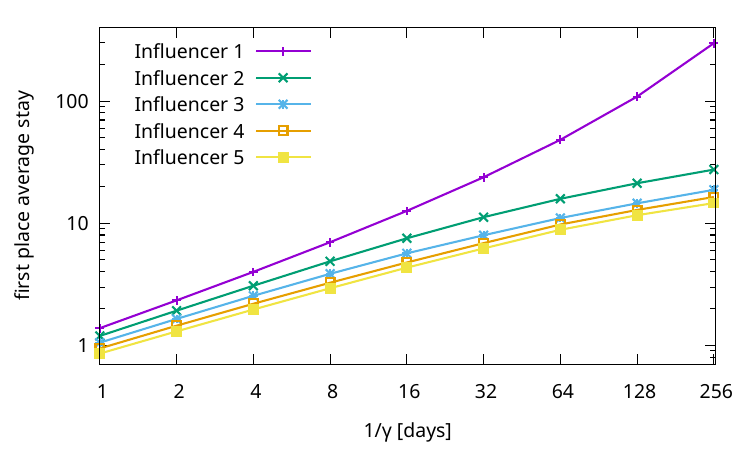}
    \caption{Sensitivity analysis with respect to popularity decay rate $\gamma$.
    Impact on first-place probability (left plot) and first-place average stay (right plot). }
    \label{fig:gam}
\end{figure}

\begin{figure} [!ht]
    \centering
    \includegraphics[width=0.45\linewidth]{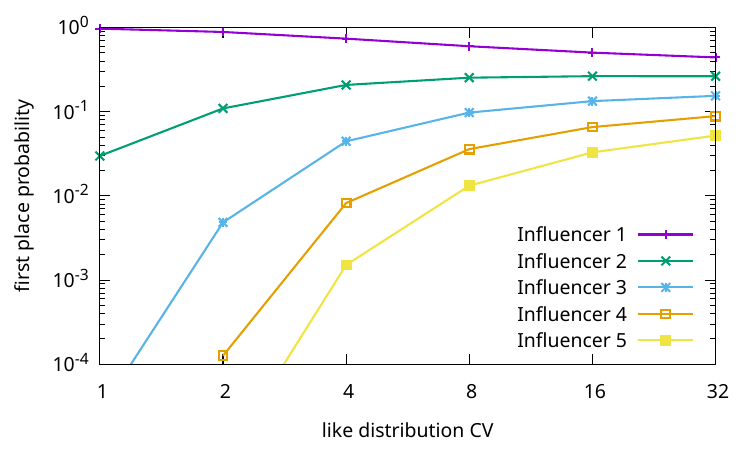}
    \hspace{8mm}
    \includegraphics[width=0.45\linewidth]{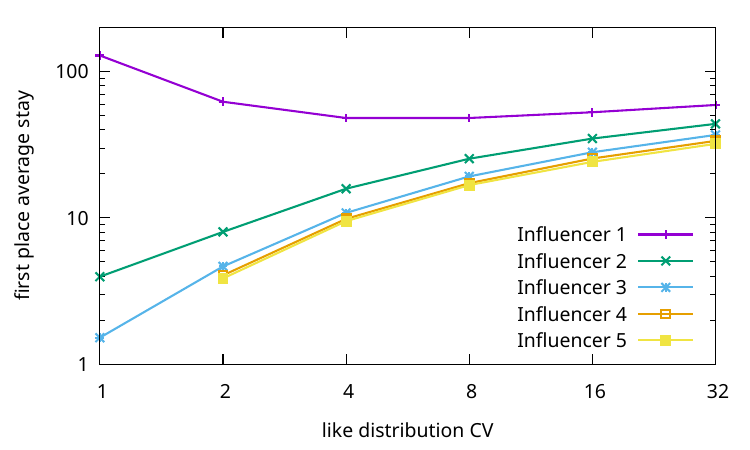}
    \caption{Sensitivity analysis with respect to the coefficient of variation $CV$ of the like distribution. Impact on first place probability (left plot) and first place average stay (right plot). }
    \label{fig:CV}
\end{figure}

Analogous considerations can be done when we fix $\gamma$ and change the coefficient of variation of the like distribution (see Figure \ref{fig:CV}). 
Naturally, concurrent variation of both $\gamma$ and $CV$ relative to our baseline would produce a compound effect, to be again interpreted 
through the lens of stochastic versus deterministic behavior (i.e., concentration around the mean) in the resulting popularity dynamics.

We emphasize that a proper notion of fairness among influencers is required to assess whether a given system performs better than another, but a formal definition of this notion extends beyond the scope of the present investigation.

\begin{figure}[!ht]
    \centering
    \includegraphics[width=0.45\linewidth]{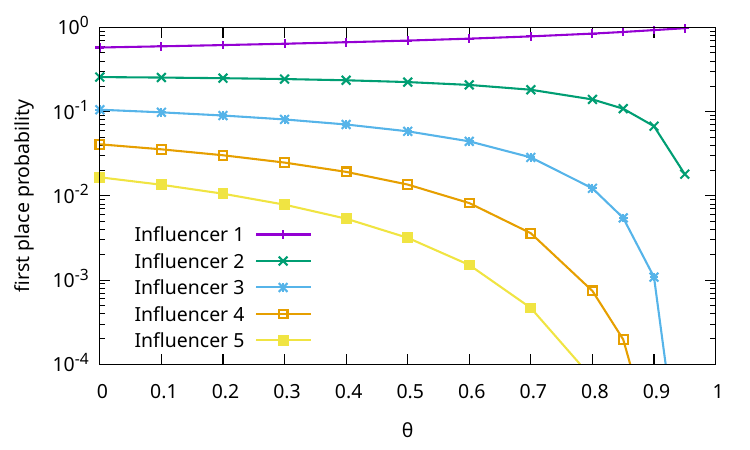}
    \hspace{8mm}
    \includegraphics[width=0.45\linewidth]{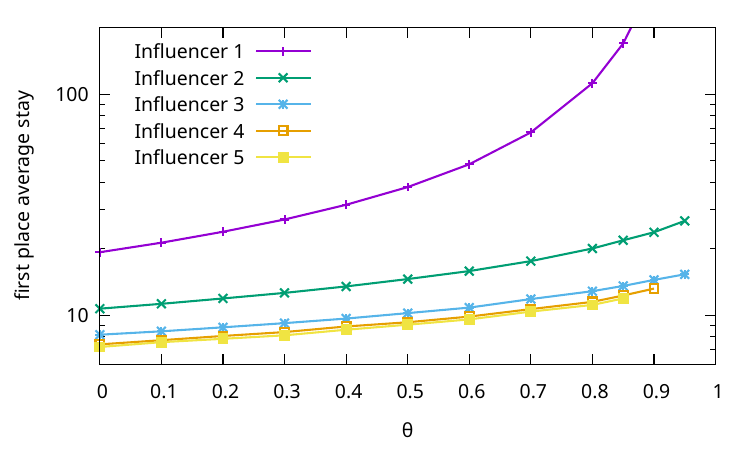}
    \caption{Sensitivity analysis with respect to exponent $\theta$. Impact on first place probability (left plot) and first place average stay (right plot). }
    \label{fig:theta}
\end{figure}

Next, in Figure \ref{fig:theta}, we consider the impact of exponent $\theta$, describing the (sub-linear) growth of the number of likes collected by a post as a function of the influencer's current popularity. 
As expected, larger values of $\theta$ (but recall that we need $\theta < 1$ for the system to be stable) amplify the disparity among influencers, underscoring the critical role of the platform's engagement mechanism(s) in determining their relative success.

\subsection{Impact of Non-Poisson Post Arrival Process}
Here, we examine how the burstiness of the post arrival process affects our metrics with respect to the baseline scenario, wherein posts are generated according to a simple Poisson arrival process of fixed intensity 4 posts/day. Recall that in our model we consider an inhomogeneous Poisson process for each influencer $i$, with intensity $\lambda(t\mid X^{[i]}(t^-)=x) = \lambda_{0,i} + \lambda_{1,i} x^{\phi_i}$, 
which increases the probability of new post emissions during periods of higher influencer popularity, a pattern consistently observed in empirical data.

To simplify the exploration of the parameter space, we proceed as follows:
(i) we establish the constant term of the posting rate $\lambda_{0} = 1$, equal for all influencers; 
(ii) for  $\phi$, we consider $\{0, 0.1, 0.2, 0.3\}$ as possible values, 
equal for all influencers; 
(iii) we systematically vary $\lambda_{1}$, setting it equal for all influencers.

Figure \ref{fig:lam1} shows the results of this experiment for $\phi = 0.2$, 
reporting the first place probability (solid lines, left y axes) and the obtained average posting rate (dashed lines, right y axes). As expected, larger values of $\lambda_1$ produce larger values of average posting rate, amplifying discrepancies among the five influencers. To isolate the effect of non-homogeneous post arrival rate,  while ensuring a fair competition among the influencers, we implemented the following methodology: for each considered value of $\phi$, 
and for each influencer, we numerically determined the value of $\lambda_{1}$ that yields an average posting rate of 4 posts/day (as in the baseline scenario). 

In Figure \ref{fig:lam1}, such values of $\lambda_{1}$, in the case of $\phi = 0.2$, are those at which dashed lines intersect the horizontal dotted line plotted at $y=4$. 
Subsequently, we recalculated the first-place probability in a unique scenario utilizing these calibrated $\lambda_{1}$, with results in Table \ref{tab:lam1}. 
 Note that the first column ($\phi = 0$) corresponds to the case of Poisson post arrival rate. We observe that, as we increase $\phi$, thereby enhancing the burstiness of the post arrival process, discrepancies among influencers tend to reduce.

\begin{figure}[htbp]
    \centering
    \begin{minipage}{0.46\linewidth}
        \centering
        \includegraphics[width=0.9\linewidth]{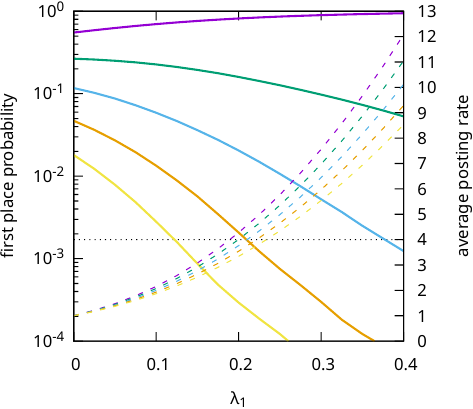}
        \caption{Sensitivity analysis with respect to $\lambda_1$ ,
        for fixed $\lambda_0 = 1$, $\phi = 0.2$.}
        \label{fig:lam1}
    \end{minipage}
    \hfill
    \begin{minipage}{0.50\linewidth}
        \centering
        \captionof{table}{First place probability with adapted $\lambda_{1}$, so that the average posting rate is constant (4 posts/day) for all influencers.}
        \label{tab:lam1}
        \begin{tabular}{c|c|c|c|c}
            \toprule
            Influencer & $\phi = 0$ & $\phi = 0.1$ & $\phi = 0.2$ & $\phi = 0.3$ \\
            \midrule
            1 & 0.738 & 0.701 & 0.651 & 0.597 \\
            2 & 0.208 & 0.226 & 0.253 & 0.264 \\
            3 & 0.045 & 0.058 & 0.073 & 0.099 \\
            4 & 0.008 & 0.012 & 0.019 & 0.032 \\
            5 & 0.002 & 0.002 & 0.004 & 0.009 \\
            \bottomrule
        \end{tabular}

    \end{minipage}
\end{figure}

\section{Discussion and Concluding Remarks}
Our work combines theoretical modeling of popularity dynamics with empirical data from Facebook, providing insights into the complex interaction of factors like individual activity, popularity decay, content attractiveness, and the platform's role in content visibility. Besides deriving ergodicity conditions, we conducted a sensitivity analysis to explore the impact of system parameters on possible outcomes. 

In principle, an ideal system should provide each influencer with a fair opportunity to gain attention, commensurate with their intrinsic merit.
In this context, one should aim to avoid two problematic extremes: (i) a situation in which the most attractive influencer monopolizes attention, leaving too little visibility for others; and (ii) a condition in which influencers are roughly equally likely to gain primacy, 
irrespective of their intrinsic merit or capabilities.
Fairness, therefore, involves not only fostering an environment of discussion and information exchange in which each influencer has a chance to gain public attention commensurate with merit, but also ensuring that the attention amplification induced by recommendation systems and platform algorithms does not distort competition.
Finding an optimal balance between the above two extremes involves complex ethical, technological, sociological, and political issues that are beyond the scope of this analysis. However, the proposed model, by offering a quantitative tool for examining how various factors influence the considered metrics, allows one to explore “what-if” scenarios, yielding valuable insights into popularity dynamics and competitive processes within online social networks.

Subsequent research will extend the analysis to other platforms, develop a framework for influencer competition, and explore control strategies to foster equity in competitive environments.

\section*{Acknowledgments}
    This research has been supported by the European Union -- Next Generation EU, Mission 4, Component 1, under the PRIN project TECHIE: ``A control and network-based approach for fostering the adoption of new technologies in the ecological transition'' Cod. 2022KPHA24 CUP Master: D53D23001320006, CUP: B53D23002760006.

\bibliographystyle{IEEEtran}
\bibliography{sample}

\appendix

\section{Proof of the Theorems in Section \ref{sec:theory}}

Note that $\{X(t)\}_{t\ge 0}$ is a Continuous Time Markov  Process (CTMP) over a general (uncountable) space state (i.e., $\mathbb{R}^+:= (0, \infty$)).
Given the times $T_n$ of $n$-th jump, we consider the Discrete Time Markov Priocess (DTMP), denoted by  $\{X_n\}_{n\in\mathbb{N}}$, which is defined on the same space state \cite{Twedie-Meyn}
by $
X_n=X(T_n^{-})=X(T_n)$ where $X(t^{-})$ stands for the left limit $\lim_{u\uparrow t} X(u)$.
We first characterize the distribution of inter jumps times.
\begin{proposition}\label{prop:T_n}Let $\Delta T_n=T_{n+1}-T_n$ be the inter-jump time and $X(t^+)=\lim_{s\downarrow t} X(s)$. We have 
\begin{align*}
    &\overline F_{\Delta T_n {|\color{black}X(T_n^+)}}(\zeta\mid z) :=\mathbb{P}(\Delta T_n> \zeta \mid {\color{black}X(T_n^+)}=z)= \\
    &\exp\left(-(\lambda_{0}+\mu){\color{black}{\zeta}}-\frac{\lambda_{1} }{\gamma\phi} z^{\phi} 
    (1-\mathrm{e}^{-\gamma \phi\zeta})\right)
\end{align*}
and the conditional probability distribution is given by
\begin{align*}
f_{\Delta T_n{|X(T_n^+)}}(\zeta\mid z) &= (\lambda_{0} +\mu + \lambda_{1}  z^{\phi}) \\
&\cdot \mathrm{e}^{-\gamma \phi\eta} )\mathrm{e}^{-(\lambda_{0}+\mu)\zeta-\frac{\lambda_{1} }{\gamma\tau} z^{\phi} 
\left(1-\mathrm{e}^{-\gamma \phi\zeta}\right)}.
\end{align*}

\end{proposition}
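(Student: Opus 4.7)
\textbf{Proof plan for Proposition~\ref{prop:T_n}.}
The plan is to exploit the fact that, conditionally on $X(T_n^+)=z$, the trajectory of $X(t)$ on the open interval $(T_n,T_{n+1})$ is completely deterministic, because on that interval no jump of either $N_I^{[i]}$ or $N_E^{[i]}$ occurs and the only remaining dynamics in \eqref{SDE} is the drift $-\gamma X(t)\,dt$. Solving this linear ODE with initial value $z$ gives
\begin{equation*}
    X(T_n+s) = z\,\mathrm{e}^{-\gamma s}, \qquad 0 < s < \Delta T_n.
\end{equation*}
First I would make this observation rigorous by arguing that the event $\{\Delta T_n>\zeta\}$ coincides with the event that no point of the superposed counting process $N_I^{[i]}+N_E^{[i]}$ falls in $(T_n,T_n+\zeta]$, so on this event the deterministic formula above holds throughout.

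Next I would invoke the standard characterization of point processes through their stochastic intensity (see e.g.\ Brémaud). Under Assumption~\ref{ass:posting_rate}, and using the independence of $N_E^{[i]}$ (a homogeneous Poisson process of rate $\mu$) from $N_I^{[i]}$, the stochastic intensity of the superposition at time $T_n+s$, conditional on $X(T_n^+)=z$ and on no jump having yet occurred, equals
\begin{equation*}
    \Lambda(s\mid z) := \lambda_{0}+\mu+\lambda_{1}\bigl(z\,\mathrm{e}^{-\gamma s}\bigr)^{\phi}
    = \lambda_{0}+\mu+\lambda_{1} z^{\phi}\mathrm{e}^{-\gamma\phi s}.
\end{equation*}
By the standard survival formula for the first point of a process with conditional intensity, I would then write
\begin{equation*}
    \overline F_{\Delta T_n\mid X(T_n^+)}(\zeta\mid z) = \exp\!\left(-\int_0^\zeta \Lambda(s\mid z)\,\mathrm{d}s\right).
\end{equation*}

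The remaining step is pure computation: evaluating the integral,
\begin{equation*}
    \int_0^\zeta \Lambda(s\mid z)\,\mathrm{d}s
    = (\lambda_{0}+\mu)\zeta + \frac{\lambda_{1}}{\gamma\phi}\,z^{\phi}\bigl(1-\mathrm{e}^{-\gamma\phi\zeta}\bigr),
\end{equation*}
which gives the claimed survival function. Differentiating with respect to $\zeta$, noting that $\frac{\mathrm{d}}{\mathrm{d}\zeta}\bigl[(\lambda_{0}+\mu)\zeta + \frac{\lambda_{1}}{\gamma\phi}z^{\phi}(1-\mathrm{e}^{-\gamma\phi\zeta})\bigr] = \lambda_{0}+\mu+\lambda_{1}z^{\phi}\mathrm{e}^{-\gamma\phi\zeta}$, yields the density $f_{\Delta T_n\mid X(T_n^+)}(\zeta\mid z)$ stated in the proposition.

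There is no real analytic obstacle here; the only delicate point is a bookkeeping one, namely justifying the use of the conditional intensity given the left-continuous state $X(T_n^+)\mathrm{e}^{-\gamma s}$ rather than the raw process $X(T_n+s)$. This is handled by the standard result that, on the event $\{\Delta T_n>s\}$, the two coincide, so the intensity of the next jump is deterministic in $s$ and the survival/density formulas follow from the classical inhomogeneous-Poisson calculation applied pathwise.
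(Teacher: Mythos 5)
Your proposal is correct and follows essentially the same route as the paper's proof: both exploit the deterministic exponential decay $X(T_n+s)=z\,\mathrm{e}^{-\gamma s}$ on the no-jump event, apply the survival formula $\exp\left(-\int_0^\zeta \Lambda(s\mid z)\,\mathrm{d}s\right)$ for the first point of the superposed process with conditional intensity $\lambda_0+\mu+\lambda_1 z^{\phi}\mathrm{e}^{-\gamma\phi s}$, and differentiate to get the density. Your write-up is in fact slightly more careful than the paper's, which presents the same computation with some typographical slips (stray $\tau$ for $\phi$ and $\eta$ for $\zeta$).
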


\begin{proof}
We compute the distribution by noting that the event 
 $\{\Delta T_n>\zeta\}$, is equivalent to the event that no jumps occur in the interval $(T_{n},T_{n}+\zeta]$. Therefore, as rather immediate consequence of the stochastic intensity definition,  given $X(T_n^+)=z$, we have that $\Delta T_n$ is  distributed as:
    \begin{align*}
    &\overline F_{\Delta T_n{|\color{black}X(T_n^+)}}(\zeta\mid z)= \mathbb{E}[\mathrm{e}^{-\int_0^\zeta \lambda(\eta) \mathrm{d\eta} }\mid z]\\
    &=\mathrm{e}^{- \int_0^\zeta \left(\lambda_0+\mu+ \lambda_1\cdot(z\mathrm{e}^{-\gamma \eta})^\tau\right) \mathrm{d} \eta}=\mathrm{e}^{-(\lambda_0+\mu)\zeta-\frac{\lambda_1 }{\gamma\tau} z^\tau 
    (1-\mathrm{e}^{-\gamma \tau\eta})}
\end{align*}

The conditional probability distribution is then obtained by taking the derivative of cumulative distribution function $f_{\Delta T_n{|\color{black}X(t_n^+)}}(\zeta\mid z)=\frac{\mathrm d}{\mathrm{d}\zeta}[1-\overline F_{\Delta T_n {|\color{black}X(t_n^+)}}(\zeta\mid z)]$.
\end{proof}

Now, turning our attention to the DTMP $\{ X_n\}_n$, we have:

\begin{proposition}
Let $\{X_n\}_{n\in\mathbb{N}}$ be the embedded chain associated to {{$\{X(t)\}_{t\geq 0}$}. Then the sequence $\{X_n\}_{n\in\mathbb{N}}$  evolves according to the following recursion
\begin{align}\label{eq:dtmc}
X_{n+1}=&[X_n +V(X_n)\ind_{
\{T_n\in N_I \}} +W\ind_{\{T_n\in N_E\}} ] \nonumber\\
&\cdot \mathrm{e}^{-{ \color{black}\gamma}\Delta T_n(X_n+V(X_n))} 
\end{align}
with  transition kernel defined by}
\begin{align}\label{eq:kernel}
    P(x, dy)
    = & \frac{1}{\gamma y} \int^\infty_{y} \Big[ p^{(n)}_I (x) f_{V|X}(z-x\mid x)\\  + & p^{(n)}_E (x)   f_{W}(z-x)
        \Big] f_{\Delta T_n}\left(\frac{1}{\gamma}\log \frac{z}{y} \mid z \right) dz
\end{align}
{{where $p^{(n)}_I $ and $p^{(n)}_E$ are the probabilities of the events $\{T_n \in N_I\}$ and $\{T_n \in N_E\}$, respectively, satisfying 
\[
\begin{split}
 p^{(n)}_E(x) =\frac{\mu}{\lambda_{0}+\mu+\lambda_{1}(x^{\phi})}\\
  p^{(n)}_I(x) =1- \frac{\mu}{\lambda_{0}+\mu+\lambda_{1}(x^{\phi})}.
\end{split}
\]}}
\end{proposition}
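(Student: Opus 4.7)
The strategy is to split the trajectory between two consecutive jump epochs into (i) the deterministic exponential decay induced by the drift $-\gamma X$ in \eqref{SDE}, and (ii) the additive kick delivered at the jump epoch. Combining these two effects with the law of $\Delta T_n$ already obtained in Proposition~\ref{prop:T_n}, and with a thinning/superposition argument to decide whether the firing at $T_n$ comes from $N_I$ or from $N_E$, delivers both the one-step recursion and the kernel.

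First I would derive the recursion. Between $T_n$ and $T_{n+1}$ neither counting process fires, so \eqref{SDE} reduces to $\mathrm{d}X(t) = -\gamma X(t)\,\mathrm{d}t$ and hence $X(t) = X(T_n^+)\,\mathrm{e}^{-\gamma(t-T_n)}$ on $(T_n,T_{n+1})$. Letting $t\uparrow T_{n+1}$ and using $X_{n+1} = X(T_{n+1}^-)$ gives $X_{n+1} = X(T_n^+)\,\mathrm{e}^{-\gamma\Delta T_n}$. Since $X(T_n^+)=X(T_n^-)+V(X_n)$ if $T_n\in N_I$ and $X(T_n^+)=X(T_n^-)+W$ if $T_n\in N_E$, this yields \eqref{eq:dtmc}. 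For the jump-type probabilities I would apply the standard superposition/thinning fact: given that a point of $N_I+N_E$ occurs at $T_n$ with $X(T_n^-)=x$, the instantaneous intensities of $N_I$ and $N_E$ at $T_n^-$ are $\lambda_0+\lambda_1 x^\phi$ and $\mu$ respectively, so the conditional probability that the firing came from each process is the ratio of the corresponding intensity to the sum, which is exactly the claimed $p_I^{(n)}(x)$ and $p_E^{(n)}(x)$.

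The remaining step is the kernel computation. Condition on $X_n=x$ and introduce the post-jump state $Z:=X(T_n^+)$. By the mixture just derived, the law of $Z$ given $X_n=x$ has density $p_I^{(n)}(x)\,f_{V\mid X}(z-x\mid x)+p_E^{(n)}(x)\,f_W(z-x)$ in $z$. Given $Z=z$, Proposition~\ref{prop:T_n} supplies the conditional density $f_{\Delta T_n\mid Z}(\zeta\mid z)$ of $\Delta T_n$, and $X_{n+1}=Z\,\mathrm{e}^{-\gamma\Delta T_n}$ by Step~1. A change of variables $\zeta\mapsto y=z\,\mathrm{e}^{-\gamma\zeta}$ at fixed $z$, with Jacobian $|\mathrm{d}\zeta/\mathrm{d}y|=1/(\gamma y)$, converts $f_{\Delta T_n\mid Z}(\zeta\mid z)\,\mathrm{d}\zeta$ into $\frac{1}{\gamma y} f_{\Delta T_n\mid Z}\!\left(\tfrac{1}{\gamma}\log(z/y)\mid z\right)\mathrm{d}y$ on the range $0<y\le z$. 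Integrating out $z$ over $[y,\infty)$ (the natural range because $\Delta T_n\ge 0$ forces $y\le z$) against the mixture density of $Z$ reproduces the formula for $P(x,\mathrm{d}y)$ stated in \eqref{eq:kernel}.

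\textbf{Main obstacle.} The only nonroutine point is keeping the three sources of randomness (the jump type at $T_n$, the jump size $V$ or $W$, and the inter-arrival time $\Delta T_n$) coherently coupled: once we condition on $X_n=x$, the size $V$ or $W$ is independent of $\Delta T_n$ only after the further conditioning on $Z=X(T_n^+)$, because the conditional intensity governing $\Delta T_n$ in Proposition~\ref{prop:T_n} depends on $Z$ through $z^\phi$. Writing this chain of conditioning as a disintegration — so that the integrand in \eqref{eq:kernel} genuinely factors into the mixture density of $Z$ times $f_{\Delta T_n\mid Z}$ — and then justifying the change of variables above over the effective support $y\le z$ is the only care-requiring step; everything else is a direct application of Proposition~\ref{prop:T_n} and of standard properties of superposed point processes with stochastic intensities.
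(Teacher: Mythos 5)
Your proposal is correct and follows essentially the same route as the paper: both derive the recursion by solving the pure-decay ODE between jumps and adding the mixture kick at $T_n$ with type probabilities given by the intensity ratios, and both obtain the kernel by conditioning on the post-jump value $z$, invoking Proposition~\ref{prop:T_n} for the law of $\Delta T_n$ given $z$, and converting to the variable $y=z\mathrm{e}^{-\gamma\Delta T_n}$ (the paper does this via the event identity $\{\Delta T_n \ge \tfrac{1}{\gamma}\log\tfrac{z}{y}\}=\{X_{n+1}\le y\}$ rather than your explicit Jacobian $1/(\gamma y)$, but these are the same computation). No gaps.
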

\begin{proof} Note that by construction: $\ind_{\{T_n\in N_I \}} +\ind_{\{T_n\in N_E\}}=1$, and
\[
\begin{split}
  p^{(n)}_E(x) :=\mathbb{E}[\ind_{\{T_n\in N_I \}}\mid X_n=x] =\frac{\mu}{\lambda_{0}+\lambda_{1}(x^{\phi})+\mu}
\end{split}
\]
from which
\begin{align*}
  p^{(n)}_I(x) & :=\mathbb{E}[\ind_{\{T_n\in N_E \}}\mid X_n=x] \\
  &=1-p^{(n)}_E(x)=1- \frac{\mu}{\lambda_{0}+\lambda_{1}(x^{\phi})+\mu}.
\end{align*}

Therefore the value of the process immediately after the \(n\)-th jump is:
\begin{align*}
 X(T_n^+) = &
 X(T_n) + V(X(T_n))\ind_{
    \{T_n\in N_I \}} +W\ind_{\{T_n\in N_E\}}\\ = & X_n + V(X_n )\ind_{
    \{T_n\in N_I \}} +W\ind_{\{T_n\in N_E\}} 
 \end{align*}
and the process evolves according to the decay dynamics:
   $$
   X(t)\ind_{\{ T_n< t \le T_{n+1}  \}} = X(T_n^+) \mathrm{e}^{-\gamma (t - T_n)}\ind_{\{ T_n< t \le T_{n+1}  \}}.
   $$
   Evaluating the process at \(T_{n+1} = T_n + \Delta T_n\), we get:
   $
   X(T_{n+1}) = X(T_n^+) \mathrm{e}^{-\gamma \Delta T_n}
   $
 and by replacing the post-jump value into the above expression:
   \[
   X_{n+1} =  \Big(X_n + V(X_n )\ind_{
    \{t_n\in N_I \}} +W\ind_{\{t_n\in N_E\}} \Big)
  \mathrm{e}^{-\gamma \Delta T_n(X_n + V(X_n))}.
   \]
   whose distribution is given in Proposition \ref{prop:T_n}.
   Recalling the  definition of  the transition kernel as: 
\[
P(x, dy):=  \mathbb{P}(X_{n+1}\in dy \mid X_n= x)
\]
and conditioning on the event $\{
 X_n + V(X_n )\ind_{
    \{t_n\in N_I \}} +W\ind_{\{t_n\in N_E\}} =z\}$, from \eqref{eq:dtmc}
 we obtain for any $z>y$ 
 $$
\left\{\Delta T_n(z) \ge  \frac{1}{\gamma} \log \frac{z}{y}\right\}=\{X_{n+1}\le y \}
$$
Then the probability that $X_{n+1}$ falls within a small interval around $y$ is given by 
\begin{align*}
P(x, dy)= & \int_{0}^{\infty} \mathbb{P}(X_{n+1} \in dy \mid X_n = x,\\ & V_n \ind_{
    \{t_n\in N_I \}} +W\ind_{\{t_n\in N_E\}}=z-x)\\ 
 &\times \Big[ p^{(n)}_I(x)  f_{V|X}(z-x\mid x) + p^{(n)}_E (x)   f_{W}(z-x)
\Big] dz.
\end{align*}

Substitute $\Delta T_n(z)$ and express the integral over $z$ from $y$ to $\infty$:

\begin{align*}
P(x, dy) = & \frac{1}{\gamma y} \int_{y}^{\infty}  \Big[ p^{(n)}_I (x) f_{V|X}(z-x\mid x)+ p^{(n)}_E (x)   f_{W}(z-x)
\Big] \\ & f_{\Delta T_n\mid z}\left(\frac{1}{\gamma} \log \frac{z}{y} \mid z \right) dz.
\end{align*}
\end{proof}

%

It should be noted that the $P(\cdot, dy)$ is  continuous (and hence lower semi-continuous) as long as $y\in (0, \infty)$ and the total variation distance 
$\delta(F_V(z \mid x_1), F_V(z \mid x_2))\to 0$  as $|x_1-x_2|\to 0$. 

In the following to simplify the notation we denote with $Z(X_n )= V(X_n )\ind_{
    \{t_n\in N_I \}} +W\ind_{\{t_n\in N_E\}} $ and $f_Z(y\mid x)= p^{(n)}_I (x) f_{V|X}(y\mid x) + p^{(n)}_E (x)   f_{W}(y)$

\begin{Lemma}
    The following facts are true.
    \begin{enumerate}
    \item[a)] The DTMP $\{X_n\}_{n\in\mathbb{N}}$ is irreducible with respect to the restriction to $[\kappa_I,\kappa_S]$ of the Lebesgue measure $\mu_{\text{Leb}(0,\infty)} $ for any $\kappa_I>0$ and  $\kappa_S>\kappa_I$;
     \item[b)] The DTMP $\{X_n\}_{n\in\mathbb{N}}$ is strongly aperiodic.
     \end{enumerate}
\end{Lemma}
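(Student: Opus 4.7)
The plan is to show both claims by establishing a single stronger fact: the one‑step kernel $P(x, dy)$ admits a density $p(x,y)$ with respect to Lebesgue measure on $(0,\infty)$, and this density is jointly continuous and strictly positive on $(0,\infty)\times(0,\infty)$. From the kernel formula already derived in the preceding proposition,
\begin{equation*}
  p(x,y)=\frac{1}{\gamma y}\int_{\max(x,y)}^{\infty}\Bigl[p^{(n)}_I(x)\,f_{V\mid X}(z-x\mid x)+p^{(n)}_E(x)\,f_W(z-x)\Bigr]\,f_{\Delta T_n}\!\left(\tfrac{1}{\gamma}\log\tfrac{z}{y}\,\Big|\,z\right)dz.
\end{equation*}
The lower integration limit is effectively $\max(x,y)$ because jumps are positive, so $f_{V\mid X}(z-x\mid x)$ and $f_W(z-x)$ vanish for $z\le x$.

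First I would verify positivity of the integrand on $z>\max(x,y)$. Under Assumptions \ref{ass:salti} and the chosen lognormal law for $\hat V_i$, $f_{V\mid X}(v\mid x)=\frac{1}{\varepsilon+\beta_i x^\theta}f_{\hat V_i}\bigl(\tfrac{v}{\varepsilon+\beta_i x^\theta}\bigr)>0$ for every $v>0$ and every $x\ge 0$, because $\varepsilon>0$ keeps the scale parameter bounded away from $0$. Independently, by the closed form in Proposition \ref{prop:T_n}, $f_{\Delta T_n\mid X(T_n^+)}(\zeta\mid z)>0$ for every $\zeta>0$ and $z>0$, since $\lambda_0>0$ forces the leading rate to be strictly positive. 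The probabilities $p^{(n)}_I(x)$ and $p^{(n)}_E(x)$ sum to one and at least the first is strictly positive, so the bracket is strictly positive on $z>\max(x,y)$. Hence the integral, and thus $p(x,y)$, is strictly positive for every $(x,y)\in(0,\infty)^2$. Joint continuity follows from the continuous dependence of $f_{V\mid X}(\,\cdot\mid x)$ on $x$ (lognormal reparametrisation), the smoothness of $f_{\Delta T_n\mid z}$ in $(\zeta,z)$, and a standard dominated‑convergence argument applied to the integral defining $p(x,y)$ over $(\max(x,y),\infty)$.

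With this in hand, part (a) is immediate: for any $0<\kappa_I<\kappa_S<\infty$ and any Borel $A\subseteq[\kappa_I,\kappa_S]$ with $\mu_{\mathrm{Leb}}(A)>0$, and for every starting state $x\in(0,\infty)$,
\begin{equation*}
  P(x,A)=\int_A p(x,y)\,dy>0,
\end{equation*}
which is precisely $\mu_{\mathrm{Leb}}|_{[\kappa_I,\kappa_S]}$‑irreducibility (in fact in one step). For part (b), fix any such interval $C=[\kappa_I,\kappa_S]$; by joint continuity of $p$ and compactness of $C\times C$, the constant $c:=\inf_{(x,y)\in C\times C}p(x,y)$ is strictly positive. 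Setting $\nu(B):=c\,\mu_{\mathrm{Leb}}(B\cap C)$ gives a minorisation $P(x,B)\ge\nu(B)$ for every $x\in C$ and every $B\in\mathcal{B}(\mathbb{R}_+)$, with $\nu(C)=c(\kappa_S-\kappa_I)>0$. This is the Meyn–Tweedie definition of a small set of order $1$ with $\nu(C)>0$, which is exactly strong aperiodicity.

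I expect the main technical obstacle to lie not in the logical steps above but in justifying joint continuity of $p(x,y)$ rigorously at the lower endpoint of integration: as $(x,y)$ varies, the limit $\max(x,y)$ moves, and the integrand $f_{\Delta T_n\mid z}\!\bigl(\tfrac{1}{\gamma}\log\tfrac{z}{y}\mid z\bigr)$ behaves like the bounded value $\lambda_0+\mu+\lambda_1 z^\phi$ as $z\downarrow y$, so no singularity arises; nevertheless a careful uniform bound by an integrable dominant (e.g.\ using the exponential tail of $f_{\Delta T_n\mid z}$ together with the lognormal tail of $f_{V\mid X}$ on compact $(x,y)$ regions) is needed to apply dominated convergence. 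Once this routine but slightly delicate verification is complete, both conclusions follow as sketched.
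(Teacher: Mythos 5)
Your argument is sound and, for part (b), essentially coincides with the paper's: both obtain strong aperiodicity from a uniform lower bound on the one-step density over a compact set, which makes that set $\nu_1$-small with $\nu_1$ of positive mass. For part (a), however, you take a genuinely different and less general route, and this is where a gap opens up. Your one-step irreducibility rests on the claim that $f_{V\mid X}(v\mid x)>0$ for \emph{every} $v>0$, i.e.\ that the jump law has full support on $(0,\infty)$. That is true for the lognormal fit adopted in the empirical section, but the Lemma (and Theorem~\ref{lemma-ergo}, which invokes it) is stated under Assumption~\ref{ass:salti} alone, which only requires $\hat V_i$ to be a positive random variable with unit mean. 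If $\hat V_i$ has bounded support (or merely a density vanishing beyond some level), then starting from small $x$ the single-step density $p(x,y)$ is identically zero for $y$ above $x+\sup\operatorname{supp}V(x)$, and your one-step minorisation of $\mu_{\mathrm{Leb}}\vert_{[\kappa_I,\kappa_S]}$ fails for large $\kappa_S$. This is precisely why the paper's proof introduces the quantile $z_\varepsilon(x):=\sup\{z:\ 1-F_{V\mid X}(z\mid x)>\varepsilon\}$, establishes a uniform lower bound on $P(x,dy)$ only for $y\le\min(z_\varepsilon(\kappa_I),\kappa_S)$, and then iterates the kernel $k$ times so that $k\,z_\varepsilon(\kappa_I)>\kappa_S$: several moderate jumps are chained to reach any prescribed level. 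To repair your proof at the stated level of generality you would need to add exactly such a multi-step ladder argument (or explicitly add the hypothesis that $\hat V_i$ admits a strictly positive density on $(0,\infty)$, in which case your one-step argument is clean and arguably simpler than the paper's).

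A secondary, more minor point: your positivity and joint-continuity claims for $p(x,y)$ also presuppose that $V$ and $W$ admit densities at all, which again goes slightly beyond Assumption~\ref{ass:salti}; the paper itself is not fully explicit here (it remarks that continuity of $P(\cdot,dy)$ requires $\delta(F_V(\cdot\mid x_1),F_V(\cdot\mid x_2))\to 0$ as $|x_1-x_2|\to 0$), so this is a shared rather than a new weakness, but it is worth stating the density hypothesis explicitly if you keep your route. Your identification of the delicate step --- dominated convergence near the moving lower limit $\max(x,y)$ --- is correct and handled adequately by the bounds you indicate.
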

\begin{proof}	In order to prove irriducibility we show that,  for any set  $A\in \mathcal{B}([\kappa_I,\kappa_S])$
    such that $\mu_{\text{Leb}}(A)>0$ and any $x\in (0,  \infty)$,  there exists $k\in\mathbb{N}$ such that: 
    \[
    \mathbb{P}(X_{n+k}\in A\mid X_n= x)>\delta_{x,A}>0.
    \]
    Now, for any $\varepsilon\in(0,1)$,
    let   $z_{\varepsilon} (x) := \sup \{z:   1- F_{V|X}(z\mid x)> \varepsilon
    \}$,
    note that as a consequence of model assumptions $z_{\varepsilon}(x)$   is non decreasing with respect to its argument.
    Then, uniformly  for every  $y\in (\kappa_I, \min(z_{\varepsilon}(\kappa_I), \kappa_S)]$
    \[
    P(x, dy)> \delta_{x,\varepsilon}> 0
    \]
    as it can be easily checked by inspection.  Therefore  for any 
    $A\in \mathcal{B}([\kappa_I, \min(z_{\varepsilon}(\kappa_I), \kappa_S)])$.
    \[
    \mathbb{P}(X_{n+1}\in A\mid X_n= x):= \int_A P(x,dy)\ge  \delta_{x,\varepsilon} \mu_{\text{leb}}(A)
    \]	  
    Now   if     $z_{\sup}(0)>\kappa_S$ we are done, otherwise,
    consider  the $k$ -steps transition kernel, recursively defined as 
    \[
    P^k(x, dy)= \int_z P^{k+1}(x,dz)P(z,dy) 
    \]
    It turns out that  for any $y\in [\kappa_I, \min(k z_{\varepsilon}(\kappa_I), \kappa_S)]$
    \[
    P^k(x, dy)>\delta'_{x,\varepsilon}
    \]
    and therefore similarly  as before we can conclude that 
    \[
    \mathbb{P}(X_{n+k}\in A\mid X_n= x):= \int_A P^k(x,dy)\ge  \delta'_{x,\varepsilon} \mu_{\text{leb}}(A)
    \]
    for any $A\in \mathcal{B}([\kappa_I, \min(  k z_{\varepsilon}(0), \kappa_S)].$  
    
    To complete the proof  it is enough to observe that 
    for a sufficiently large $k$  necessarily $k z_{\varepsilon}(\kappa_I)>\kappa_S$. Then the item a) is proved.

    Since it can easily shown that   $\inf_{x\in [\kappa_I,\min(  z_{\varepsilon}(\kappa_I), \kappa_S) ]} \delta_{x,\varepsilon}= \delta_\varepsilon >0$, we deduce
    as in  \cite{Twedie-Meyn} (see definition on page 114) that 
    $\{X_n\}_n$ is  the strongly a-periodic.	
    Indeed   the set $[\kappa_I,  \min(z_{\varepsilon}(\kappa_I),\kappa_S)]$  turns out to be  $\nu_1$-small with respect to $\nu_1=  \delta_{\varepsilon} \mu_{\text{Leb}}([\kappa_I,\min (z_{\varepsilon}(\kappa_I),\kappa_S ) ])$.
\end{proof}	
Therefore by Theorem  $4.0.1$ in \cite{Twedie-Meyn}  there exist a unique maximal  measure with respect to which 
$\{X_n\}_n$ is irreducible. Such a measure  necessarily dominates $\mu_{\text{Leb}} (0,\infty)$. 
Moreover 
given an arbitrary open set $O\in \mathcal{B}((0,\infty))$ the transition probability
\[
P(x, O)=\int_O P(x,dy)  
\]
is clearly lower-continuous with respect to $x\in(0, \infty)$, therefore 
$\{X_n\}_n$  is  weak Feller. Now by Theorem  6.0.1 (iii) in  \cite{Twedie-Meyn},
since the  support of $\mu_{\text{Leb}}$ has clearly  a non empty-interior,
$\{X_n\}_n$ is a  T-chain.
At last   by Theorem  6.0.1 (ii) in  \cite{Twedie-Meyn} we deduce that every compact set is petite
(observe that any subset of a petite set is trivially petite).

Now we turn our attention to recurrence properties of our DTMP.

\begin{Lemma}
    As long as $\theta+\phi<1$ or $\theta+\phi=1$  and  $c/\beta>1$ with $c:=\frac{\gamma }{\lambda_0+ \lambda_1+\gamma}\alpha^{-\phi} F_Z\left((\alpha-1)x\mid x\right)$ for an arbitrary $\alpha>1$
    $\{X_n\}_n$  
    is  $\mu_{\text{Leb}}$-irriducible, stongly aperiodic  positive Harris recurrent.
    Therefore a unique stationary probability measure, $\pi$  exists, and   i.e. for any  inital condition as $n$  grows large the $\sup_{A\in \mathcal{B}(0,\infty)}| P^n(x, A)- \pi(A)|\to 0$
    (i.e. the MC ergodic).
\end{Lemma}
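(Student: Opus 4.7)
The plan is to establish positive Harris recurrence via a Foster--Lyapunov drift condition with the linear Lyapunov function $\mathcal{L}(x)=x$, and then combine this with the facts proved in the previous lemma ($\mu_{\text{Leb}}$-irreducibility, strong aperiodicity, and every compact set being petite) to invoke a standard Meyn--Tweedie ergodic theorem (Chapters 13--15 of \cite{Twedie-Meyn}), which yields both uniqueness of the stationary measure and total-variation convergence in one shot.

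The first step is to compute the one-step drift. Starting from the recursion $X_{n+1}=(X_n+Z(X_n))\,\mathrm{e}^{-\gamma\Delta T_n}$ and conditioning on the post-jump value $X(T_n^+)=x+Z(x)$, I would write
\[
\mathbb{E}[X_{n+1}\mid X_n=x]=\mathbb{E}\bigl[(x+Z(x))\,g(x+Z(x))\bigr],
\]
with $g(z):=\mathbb{E}[\mathrm{e}^{-\gamma\Delta T_n}\mid X(T_n^+)=z]$ the Laplace transform of the conditional inter-jump law supplied by Proposition \ref{prop:T_n}. A direct bound (or Jensen's inequality on the survival function in that proposition) yields $g(z)\le (\lambda_0+\mu+\lambda_1 z^{\phi})/(\lambda_0+\mu+\lambda_1 z^{\phi}+\gamma)$. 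Plugging in Assumption \ref{ass:cond_expectation} ($\mathbb{E}[V(x)\mid x]=\varepsilon+\beta x^{\theta}$) and Assumption \ref{ass:posting_rate} then produces an explicit asymptotic for the drift as $x\to\infty$.

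In the subcritical regime $\theta+\phi<1$ the contractive term $x\,g(x)$ stays strictly below $x$ by a fraction at least $\gamma/(\lambda_0+\mu+\lambda_1 x^{\phi}+\gamma)$, while the jump contribution grows only as $\beta x^{\theta}$ times a bounded factor; hence $\mathbb{E}[X_{n+1}-x\mid X_n=x]\le -\eta x$ for some $\eta>0$ and all $x\ge K$. In the critical case $\theta+\phi=1$ both terms are linear in $x$; here I would split on the event $\{Z(x)\le(\alpha-1)x\}$ for a free parameter $\alpha>1$, on which $X_{n+1}\le \alpha x\,\mathrm{e}^{-\gamma\Delta T_n}$ with inter-jump intensity at most $\lambda_0+\mu+\lambda_1(\alpha x)^{\phi}$, and on the complement absorb the contribution as a lower-order term using the tail of $\hat V$ (Assumption \ref{ass:salti}). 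This decomposition is precisely what produces the constant
\[
c=\frac{\gamma}{\lambda_0+\lambda_1+\gamma}\,\alpha^{-\phi}\,F_Z((\alpha-1)x\mid x)
\]
appearing in the hypothesis, and $c/\beta>1$ is exactly the condition that forces the effective linear coefficient in the drift to be strictly less than $1$. In either regime one therefore obtains the Foster--Lyapunov inequality
\[
\mathbb{E}[\mathcal{L}(X_{n+1})\mid X_n=x]\le \rho\,\mathcal{L}(x)+b\,\mathbf{1}_{C}(x),\qquad \rho<1,\ b<\infty,
\]
with $C$ a compact subset of $(0,\infty)$.

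Since $C$ is petite by the previous lemma, Theorem 15.0.1 (equivalently Theorem 11.3.4) of \cite{Twedie-Meyn} delivers positive Harris recurrence together with existence and uniqueness of the invariant probability measure $\pi$; strong aperiodicity then upgrades this, through Theorem 13.0.1 of \cite{Twedie-Meyn}, to the total-variation convergence $\sup_{A\in\mathcal{B}(0,\infty)}|P^n(x,A)-\pi(A)|\to 0$ for every initial $x$, which is the ergodicity claim. The main obstacle I anticipate is the critical case $\theta+\phi=1$: the naive drift bound only gives $\Delta\mathcal{L}(x)=O(x)$ with an indeterminate sign, so one has to optimize over $\alpha$ and exploit the explicit Laplace transform $g$ to squeeze out a strict contraction --- this is the reason the hypothesis takes the somewhat intricate form involving $c$, whereas the subcritical case is comparatively routine.
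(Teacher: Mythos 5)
Your proposal follows essentially the same route as the paper: the linear Lyapunov function $\mathcal{L}(x)=x$, the bound $\mathbb{E}[\mathrm{e}^{-\gamma\Delta T_n}\mid X(T_n^+)=z]\le(\lambda_0+\mu+\lambda_1 z^{\phi})/(\lambda_0+\mu+\lambda_1 z^{\phi}+\gamma)$, the split of the jump integral at $\alpha x$ producing the constant $c$, and the combination with the previously established irreducibility, strong aperiodicity and petiteness of compacts via Theorem 11.3.4 of \cite{Twedie-Meyn}. Your identification of where the critical case $\theta+\phi=1$ bites, and why $c/\beta>1$ is the right condition, matches the paper exactly.

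One correction is needed in how you state the drift condition. When $\phi>0$ the contraction factor satisfies $g(z)\to 1$ as $z\to\infty$, so $x\,g(x)\le x- c\,x^{1-\phi}$ only; the negative part of the drift is of order $x^{1-\phi}$, not of order $x$. Consequently your claim $\mathbb{E}[X_{n+1}-x\mid X_n=x]\le-\eta x$ in the subcritical regime is an overstatement, and the geometric Foster--Lyapunov inequality $\mathbb{E}[\mathcal{L}(X_{n+1})\mid X_n=x]\le\rho\,\mathcal{L}(x)+b\,\mathbf{1}_C(x)$ with a uniform $\rho<1$ is false for $\phi>0$ (there is no uniform multiplicative contraction at infinity). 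What your own bounds actually establish, and what the paper uses, is the additive form
\begin{equation*}
\mathbb{E}[\mathcal{L}(X_{n+1})-\mathcal{L}(X_n)\mid X_n=x]\le -c\,x^{1-\phi}+(\beta+\delta)x^{\theta}\le -1+b\,\mathbf{1}_{\{x\le x_0\}}(x),
\end{equation*}
which is exactly the hypothesis of Theorem 11.3.4 and suffices for positive Harris recurrence; total-variation convergence then follows from aperiodicity as you say. So the conclusion stands, but you should weaken the stated drift inequality to the additive version (or restrict the geometric claim to the case $\phi=0$).
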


\begin{proof} 
    By Theorem 11.3.4 in \cite{Twedie-Meyn},  to prove that $\{X_n\}_n$   is  a  positive Harris recurrent MC , it is enough to apply a drift argument.
    In particular, given petite set $C$, if 
    a real valued function $\mathcal{L}(\cdot)$ $(0,\infty) \to [0,\infty]$
    can be found such that:
    \begin{equation} \label{drift}
        \mathbb{E} [\mathcal{L}(X_{n+1})-\mathcal{L}(X_{n})\mid X_n=x]<-1 +b \ind_C(x)
    \end{equation}
    Then the DTMP is positive Harris recurrent.
    
    Now choosing  $\mathcal{L}(x)=x$, we have:
    \begin{align*}
        &\mathbb{E} [\mathcal{L}(X_{n+1})- \mathcal{L}(X_{n})\mid X_n=x]
        =\mathbb{E} [X_{n+1}\mid X_n=x]-x \\
        &=\mathbb{E}  [ (x +Z(x)) \mathrm{e}^{-\gamma \Delta T_n(x+Z(x))})] -x \\
        &=x(\mathbb{E}  [ \mathrm{e}^{- \gamma\Delta T_n(x+Z(x))} ]-1) +\mathbb{E}[Z(x) \mathrm{e}^{-\gamma\Delta T_n(x+Z(x))})] 
    \end{align*}
    Let us start to consider the case $\theta+\phi<1$, then:
    
    \begin{align*}
    &\mathbb{E}[Z(x) \mathrm{e}^{- \gamma\Delta T_n(x+Z(x))})]<\mathbb{E}[Z(x) ] \\ &=p^{(n)}_I(x)\mathbb{E}[V(x)] + p^{(n)}_E (x) \mathbb{E}[W] \le (\beta+ \delta)  x^\theta
    \end{align*}
    for any constant $\delta>0$
    while 

    \begin{align*}
        &\mathbb{E}  [ \mathrm{e}^{- \gamma \Delta T_n(x+Z(x))} ]=\int \int  \mathrm{e}^{-\gamma \zeta} f_{\Delta t }(\zeta \mid x+v) f_Z(v\mid x)
        \mathrm{d}v\mathrm d \zeta \\
        &\le   \int \int  \mathrm{e}^{-\gamma \zeta} [\lambda_0+ \lambda_1(x+v)^\phi] e^{-[\lambda_0+ \lambda_1(x+v)^\phi]\zeta } 
        \mathrm d \zeta f_Z(v\mid x)
        \mathrm{d}v\\ 
       &= \int \frac{\lambda_0+ \lambda_1(x+v)^\phi }{\lambda_0+ \lambda_1(x+v)^\phi+\gamma} f_Z(v\mid x)
        \mathrm{d}v \\  
        &= 1- \int \frac{\gamma }{\lambda_0+ \lambda_1(x+v)^\phi+\gamma} f_Z(v\mid x)
        \mathrm{d}v\\
    \end{align*}
Note that the first inequality holds because the stochastic intensity in $(X_n,X_{n+1})$ 
is upper bounded by $ \lambda_0+ \lambda_1(X_n+Z(X_n))^\phi$.
Now  given an arbitrary $\alpha>1$
\begin{align*}
&\int_0^\infty \frac{\gamma }{\lambda_0+ \lambda_1(x+v)^\phi+\gamma} f_Z(v\mid x)
        \mathrm{d}v\\
        &=\Big(\int_0^{\alpha x} +\int_{\alpha x}^\infty\Big) \frac{\gamma }{\lambda_0+ \lambda_1(x+v)^\phi+\gamma} f_Z(v\mid x)
        \mathrm{d}v\\\ & \ge \frac{\gamma }{\lambda_0+ \lambda_1(\alpha x)^\phi+\gamma} F_Z\left((\alpha-1)x\mid x\right)
\end{align*}
    
    Therefore for $\alpha x>1$
\begin{align*}
&\mathbb{E}  [ \mathrm{e}^{- \gamma \Delta T_n(x+Z(x))} ]\le 1- \frac{\gamma }{\lambda_0+ \lambda_1(\alpha x)^\phi+\gamma} F_Z\left((\alpha-1)x\mid x\right)\\
&\le 1- \frac{\gamma }{\lambda_0+ \lambda_1+\gamma}(\alpha x)^{-\phi} F_Z\left((\alpha-1)x\mid x\right)= 1- c x^{-\phi} 
\end{align*}
with $c:=\frac{\gamma }{\lambda_0+ \lambda_1+\gamma}\alpha^{-\phi} F_Z\left((\alpha-1)x\mid x\right)$ and  
    \begin{align}
        \Delta \mathcal{L}(x):=\mathbb{E} [\mathcal{L}(X_{n+1})- \mathcal{L}(X_{n})\mid X_n=x]\nonumber\\
        < x(1 -  cx^{-\phi} -1 ) + (\beta+\delta)x^\theta\nonumber\\
        = -c x^{1-\phi}+ (\beta+\delta) x^\theta  \label{drift-components}
    \end{align}
    Now  if $\theta<1-\phi$, the first negative term asymptotically for large $x$ dominates,
      when instead $\theta+\phi=1$  the two terms in \eqref{drift-components} 
      are comparable and    a negative drift is observed  only under the additional condition that $\frac{ c}{\beta} >1$.
    In both cases   there exists a $x_0$ such that  $\Delta \mathcal{L}(x)< -1  $ for any $x\ge x_0$. 
    Moreover 
    \[
    \Delta \mathcal{L}(x)=\mathbb{E}  [ (x +Z(x)) \mathrm{e}^{- \Delta T(x+Z(x))})] -x
    \le \mathbb{E}[Z(x)]
    \]
    therefore 
    \[
    \sup_{x\le x_0} \Delta \mathcal{L}(x)=  \sup_{x\le x_0}  \mathbb{E}[Z(x)]\le  \mathbb{E}[Z(x_0)]. 
    \]
    In conclusion condition \eqref{drift} is met by setting $C=\{x\le x_0\}$ and $b=\mathbb{E}[Z(x_0)] +1$.
    
\end{proof}
    
    At last, with rather standard arguments it can be shown that the ergodicity of $\{X_n\} _n$ implies  the ergodicity of $\{X(t)\}$
    (i.e. a unique stationary distribution $\Pi(A)$  exists and   $P^t(x,A)\to \Pi(A)$  for every $x\in \mathbb{R}^+$  and $A\in \mathcal{B}(\mathbb{R}^+)$.
    
    Now, since  the measure induced by $P^t(x,A)$ can be easily shown to be absolutely continuous with respect to the Lebesgue measure $\mu_{\text{Leb}}(\mathbb{R})^+)$,
    for any $t>0$
    we have that: 
    \[
    \Pi(t,A)=\int_{y\in A}\int_{x\in \mathrm{R}^+} P^t(x, \mathrm{d}y) \Pi(0,\mathrm{d}x)  \ \
    \]
    is, as well, absolutely continuous  with respect to the Lebesgue measure $\mu_{\text{Leb}}(\mathbb{R}^+)$ with density  (i.e Radon-Nikodyn derivative)
    $f(x,t)$.

    \subsection{Kolmogorov equation}
    
    Now defined with $F(y,t):= \mathbb{P}(X(t)\le y )=\int_0^x f(x,t) \mathrm{d}x $   we have that th $F(y,t)$ satisfies the following Partial Integro-Differential Equation:
    \begin{equation}\label{diff-eq-5.1}
        \frac{\partial F(y,t)}{\partial t}=  \gamma y \frac{\partial F(y,t)}{\partial y}- \int  ( \lambda(t)+\mu)\bar F_Z( y-x \mid x) \frac{\partial F(x,t) }{\partial x}     \mathrm{d}x
    \end{equation}
    While the unique stationary distribution $F(y)$ satisfies the following ODE:
    \begin{equation}\label{stat-diff-eq-5.1}
        \gamma y \frac{\mathrm{d} F(y)}{\mathrm{d} y}=\int  (\lambda(t)+\mu)\bar F_Z( y-x \mid x) \frac{\mathrm{d}  F(x) }{\mathrm{d}  x}     \mathrm{d}x
    \end{equation}
    
\begin{proof}
    Observe that  by construction:
    \[
    F(y, t +\Delta t ) = F(y, t ) + \Delta F^+(y,t)- \Delta F^-(y,t)
    \]
    with 
    \[
    \Delta F^+(y,t)=\mathbb{P} ( X(t+\Delta t)\le y,   X(t)> y )
    \]
    and
    \[
    \Delta F^-(y,t)=\mathbb{P} ( X(t+\Delta t)>y,   X(t)\le y )
    \]
    now  denoted with $N(t,\Delta t):= N(t+\Delta t)-N(t)$, the number of jumps in $[t, t+ \Delta t)$, as immediate consequence of the definition of stochastinc intensity we have
    that:
    
    \begin{align*}
        &\mathbb{P}(N(t,\Delta t)=0\mid X(t)=y )= 1- (\lambda(y)+\mu)\Delta t + o(\Delta t)\\
        &\mathbb{P}(N(t,\Delta t)=1\mid X(t)=y  )= (\lambda (y)+\mu) \Delta t + o(\Delta t)\\
        &\mathbb{P}(N(t,\Delta t)>1\mid X(t)=y )= o(\Delta t)
    \end{align*}
    where $\lambda(y)=\lambda^{(0)}+\lambda^{(1)}y^\phi$.
    
    Now:

    \begin{align*}
        &\Delta F^+(y,t)=\sum_{k\in \mathbb{N}\cup \{0\} } \mathbb{P} ( X(t+\Delta t)\le y,   X(t)> y, N(t,\Delta t)=k )\\
    \end{align*}

    Now
    \begin{align*}
        &\mathbb{P} ( X(t+\Delta t)\le y,   X(t)> y,  N(t,\Delta t)=0)\\
        =&\mathbb{P} ( X(t) \mathrm{e}^{-\gamma \Delta t} \le y,   X(t)> y, N(t,\Delta t)=0 )\\
        =&\mathbb{P} ( y<  X(t) \le y \mathrm{e}^{\gamma \Delta t }, N(t,\Delta t)=0  ) \\
        =&\mathbb{P}( N(t,\Delta t)=0 \mid   y<  X(t) \le y \mathrm{e}^{\gamma \Delta t } )
         \mathbb{P} ( y<  X(t) \le y \mathrm{e}^{\gamma \Delta t } ).
    \end{align*}
    with
    \begin{align*}
        &\mathbb{P} ( y<  X(t) \le y \mathrm{e}^{\gamma \Delta t } ) 
        =\int_{y}^{ y \mathrm{e}^{\gamma \Delta t} }  f(z, t)\mathrm{d}z\\
        &=\int_{y}^{ y (1+\gamma y \Delta t+ o(\Delta t)) }  f(z, t)\mathrm{d}z
        =   (\gamma y \Delta t +o(\Delta t)  ) f(y,t) \\
        & = (\gamma y \Delta t +o(\Delta t)  ) \frac{F(y,t)}{\partial y} 
    \end{align*}
    
    and 
    
    \[
    \mathbb{P}( N(t,\Delta t)=0 \mid   y<  X(t) \le y \mathrm{e}^{\gamma \Delta t } )=1- \lambda(y)\Delta t +o(\Delta t)
    \]
    by the continuity of $\lambda(y )$ with respect to $y$.  
    Moreover  by construction
    \begin{align*}
        &\{\ X(t+\Delta t)\le y,   X(t)> y, N(t,\Delta t)=1\} \\
        &\subseteq \{   y<  X(t) \le y \mathrm{e}^{\gamma \Delta t} ,   N(t,\Delta t)=1 \}
    \end{align*}
    therefore 
    \begin{align*}
        \mathbb{P}(  y<  X(t) \le y \mathrm{e}^{\gamma \Delta t} ,   N(\Delta t)=1  )= o(\Delta t)
    \end{align*}
    
    Similarly
    \[
    \Delta F^-(y,t)= \sum_{k\in \mathbb{N}\cup \{0\} } \mathbb{P} ( X(t+\Delta t)> y,   X(t)< y, N(t,\Delta t)=k )
    \]
    Now observe that  the event 
    \[
    \{   X(t+\Delta t)> y,   X(t)< y, N(t, \Delta t)=0    \}=\emptyset
    \]
    
    Therefore 
    \[
    \Delta F^-(y,t)= \mathbb{P} ( X(t+\Delta t)> y,   X(t)< y, N(t,\Delta t)=1) +o(\Delta t )
    \]
    since
    \[
    \mathbb{P}(    N(t,\Delta t)>1 \mid X(t)<y  )    = o(\Delta t)
    \]
    Now
    \begin{align*}
        &\mathbb{P} ( X(t+\Delta t)> y,   X(t)< y, N(t,\Delta t)=1)\\
        &= \int_{x<y} \mathbb{P} ( X(t+\Delta t)> y,  N(t,\Delta t)=1 \mid  X(t)=x) \mathrm{d}F(x,t)   \\
        &= \int_{x<y} \mathbb{P} ( X(t+\Delta t)> y  \mid N(t,\Delta t)=1,  X(t)=x)\\
        &\qquad  \mathbb{P} (   N(t,\Delta t)=1 \mid  X(t)=x)  \mathrm{d}F(x,t)  \\
        &= \Delta t \int_{x<y} \bar F_Z(y-x\mid x)(\lambda(x)+\mu)\mathrm{d}F(x,t)  +o(\Delta t )
    \end{align*}
\end{proof}

    Then, summarizing:
    \begin{align*}
        &\frac{F(y,t+\Delta t)-F(y,t)}{\Delta t }=(\gamma y  +o(1)  ) \frac{F(y,t)}{\partial y} \\ 
        &\quad -\int_{x<y} \bar F_Z(y-x\mid x)(\lambda(x)+\mu)\mathrm{d}F(x,t)  + o(1)
    \end{align*}
    The result  is then obtained taking  liminf and limsup of both sides for $\Delta t\to 0$.
    
    Equation  \eqref{stat-diff-eq-5.1} is then immediately obtained from  \eqref{diff-eq} 
    by  looking for  solutions that are stationary  i.e., $F(y,t)=F(y)$.

\end{document}